\newif\ifshaphered

\ifdefined\isshaphered
\shapheredtrue
\fi
\documentclass[sigconf]{acmart}

\AtBeginDocument{%
  }

\setcopyright{acmlicensed}
\copyrightyear{2018}
\acmYear{2018}
\acmDOI{XXXXXXX.XXXXXXX}

\acmConference[Conference acronym 'XX]{Make sure to enter the correct
  conference title from your rights confirmation emai}{June 03--05,
  2018}{Woodstock, NY}
\acmISBN{978-1-4503-XXXX-X/18/06}

\usepackage{algorithm}
\usepackage{algorithmic}
\usepackage{tabularx}
\usepackage{diagbox}
\usepackage{array}
\usepackage{enumitem}
\usepackage{hyperref}
\usepackage{multirow}
\usepackage{makecell}

\renewcommand{\eqref}[1]{\mbox{Equation~(\ref{#1})}}

\newtheorem{definition}{Definition}
\newtheorem{theorem}{Theorem}

\ifodd 1
\newcommand{\fengyu}[1]{{\color{blue}(fengyu: #1)}}
\else
\newcommand{\fengyu}[1]{#}
\fi

\ifshaphered
\usepackage{longtable}
\newcommand\revision[1]{\textcolor{blue}{#1}}

\else
\newcommand\revision[1]{#1}

\fi
\begin{document}

\title{HeteroFedSyn: Differentially Private Tabular Data Synthesis for Heterogeneous Federated Settings}


\author{Xiaochen Li}
\affiliation{%
  \institution{UNC Greensboro}
  \country{USA}}
\email{x_li12@uncg.edu}

\author{Fengyu Gao}
\affiliation{%
  \institution{University of Virginia}
  \country{USA}}
\email{wan6jj@virginia.edu}

\author{Xizixiang Wei}
\affiliation{%
  \institution{University of Virginia}
  \country{USA}}
\email{xw8cw@virginia.edu}

\author{Tianhao Wang}
\affiliation{%
  \institution{University of Virginia}
  \country{USA}}
\email{tianhao@virginia.edu}

\author{Cong Shen}
\affiliation{%
  \institution{University of Virginia}
  \country{USA}}
\email{cong@virginia.edu}

\author{Jing Yang}
\affiliation{%
  \institution{University of Virginia}
  \country{USA}}
\email{yangjing@virginia.edu}

\renewcommand{\shortauthors}{Li et al.}

\begin{abstract}
Traditional Differential Privacy (DP) mechanisms are typically tailored to specific analysis tasks, which limits the reusability of protected data. DP tabular data synthesis overcomes this by generating synthetic datasets that can be shared for arbitrary downstream tasks. However, existing synthesis methods predominantly assume centralized or local settings and overlook the more practical horizontal federated scenario. Naïvely synthesizing data locally or perturbing individual records either produces biased mixtures or introduces excessive noise, especially under heterogeneous data distributions across participants.

We propose HeteroFedSyn, the first DP tabular data synthesis framework designed specifically for the horizontal federated setting. Built upon the PrivSyn paradigm of 2-way marginal–based synthesis, HeteroFedSyn introduces three key innovations for distributed marginal selection: (i) an $l_2$-based dependency metric with random projection for noise-efficient correlation measurement, (ii) an unbiased estimator to correct multiplicative noise, and (iii) an adaptive selection strategy that dynamically updates dependency scores to avoid redundancy.
Extensive experiments on range queries, Wasserstein fidelity, and machine learning tasks show that, despite the increased noise inherent to federated execution, HeteroFedSyn achieves utility comparable to centralized synthesis. Our code is open-sourced via the link\footnote{https://github.com/XiaochenLi-w/Federated-Tabular-Data-Synthesis-Framework}.
\footnote{This work has been accepted by SIGMOD 2026.}
\end{abstract}

\maketitle

\pagestyle{plain}
\setcounter{page}{1}
\pagenumbering{arabic}

\section{Introduction}
Protecting data privacy while leveraging sensitive datasets for data mining tasks has long been a fundamental research challenge.
To solve this challenge, Differential Privacy (DP) is proposed~\cite{DBLP:conf/tcc/DworkMNS06}, which provides strong privacy guarantees and has been widely adopted across diverse applications, such as Census~\cite{abowd2018us}, Healthcare~\cite{ziller2021medical}, and GPS~\cite{DBLP:conf/ccs/AndresBCP13}.
Traditional DP achieves privacy by injecting noise into (i) the dataset itself, (ii) the outputs of data analysis, or (iii) intermediate parameters.
These approaches have been extensively studied~\cite{li2025spas, feng2024dpi, dong2023continual, dwork2010differential, erlingsson2014rappor, gao2025data, DBLP:conf/ndss/LiQ00F0025}, covering tasks from range queries~\cite{kulkarni2019answering, li2014data} to generative model training~\cite{li2025easy, jiang2023functional}.
Although these techniques are considered mature from a research perspective, several concerns persist in real-world deployment.

A key limitation is that DP protection via noise addition is typically tied to a specific task. 
First, the scale of DP noise depends on data sensitivity, which is usually task-dependent. 
While domain-level sensitivity is also possible, it often introduces excessive noise. 
Second, noise mechanisms often require customization for each analysis pipeline, especially when injecting noise into intermediate results. 
This reduces usability and increases analysis cost.

To enable task-agnostic use of DP-protected data, researchers have proposed DP tabular data synthesis methods.
These methods generate synthetic datasets from sensitive data and allow release for arbitrary downstream tasks.
They typically operate in two steps: (1) extracting noisy statistics from the original data, and (2) sampling synthetic records that match them.
Existing methods fall into graphical-model–based~\cite{DBLP:conf/sigmod/ZhangCPSX14, DBLP:journals/pvldb/BindschaedlerSG17, DBLP:conf/uss/00010LH0H0021, mckenna2022aim} and ML-based approaches~\cite{xie2018differentially, torkzadehmahani2019dp, jordon2018pate, truda2023generating, sattarov2024differentially, tran2024differentially}, with recent surveys providing comprehensive summaries~\cite{chen2025benchmarking, yang2024tabular, hu2024sok}.
Although not always optimal for a specific task, these methods eliminate repetitive access to raw data, avoid task-specific DP mechanisms, and maintain compatibility with existing analysis tools.
As a result, DP data synthesis substantially reduces the cost of data sharing.

However, most DP tabular data synthesis methods assume a centralized data setting, where all records are stored on a single server.
A separate line of work explores a local DP setting~\cite{li2024distributed}, where each user injects noise into their own record before sharing it.
While both settings are useful, they overlook a more realistic and common scenario: multiple organizations holding disjoint subsets of data with the same attributes and wishing to collaborate.
Examples include hospitals jointly analyzing regional disease trends and schools aggregating student statistics for resource planning.
These cases fall under the horizontal federated setting, where data are distributed across parties but share the same attributes.

Two intuitive solutions fall short in this setting.
(1) Each party could independently run a DP synthesis algorithm and share local synthetic data.
However, real-world data distributions are often heterogeneous across institutions, e.g., hospitals with different specialties or banks with distinct client bases, and the merged data would form a biased and inconsistent mixture.
(2) Parties could add LDP noise to individual records and then share them, but this renders further DP synthesis unnecessary and introduces variance that scales quadratically with the dataset size, severely degrading utility.
Therefore, instead of sharing raw data or locally perturbed records, it is necessary to collaboratively exchange dataset statistics to synthesize a global DP dataset.

We follow the widely adopted PrivSyn paradigm~\cite{DBLP:conf/uss/00010LH0H0021}, synthesizing data using 2-way marginals.
However, even sharing local 2-way marginals can still lead to high noise under a limited privacy budget.
Specifically, for a dataset with $d$ attributes, the variance of noise in each marginal is $O(d^4/(4\epsilon^2))$.
Therefore, only a subset of informative marginals should be selected to guide data synthesis.
While PrivSyn provides a centralized greedy selection method, it cannot be directly applied when data are distributed and inaccessible.

To address this gap, we propose HeteroFedSyn, a framework for DP tabular data synthesis in the horizontal federated setting, with three key innovations:
(a) \emph{Attribute Dependency Metric and Marginal Compression.}
We define an $l_2$ distance metric $\text{InDif2}_{a,b}$ to measure dependencies between any two attributes $a$ and $b$.
To reduce noise and communication overhead, we apply random projection to compress the 2-way marginals while preserving dependency signals $\text{InDif2}_{a,b}$.
Specifically, for two attributes with domains of size $d_a$ and $d_b$, the length of their $2$-way marginal is $d_a\times d_b$.
It can be compressed to length $k$ using random projection, while $k<<d_a,\ d_b$.
(b) \emph{Unbiased Estimation over Noisy marginals.}
Computing $\text{InDif2}_{a,b}$ involves multiplicative operations over noisy marginals, which makes debiasing particularly challenging.
We provide a rigorous mathematical procedure for obtaining an unbiased estimate of $\text{InDif2}_{a,b}$ from the compressed noisy marginals.
(c) \emph{Adaptive Marginal Selection.}
We observe that selecting important 2-way marginals solely based on attribute dependency can be suboptimal, as it ignores the overlap between selected and unselected marginals. 
For example, once the marginals for attribute pairs (a, b) and (a, c) are selected, the correlation between b and c is already implicitly constrained. Therefore, selecting the (b, c) marginal next may be redundant, even if (b, c) has a high $\text{InDif2}_{b,c}$.
In this case, the privacy budget could be better spent on covering additional attributes.
Therefore, we introduce an adaptive marginal selection mechanism that updates $\text{InDif2}_{a,b}$ adaptively during the selection process to avoid redundancy and maximize coverage under a fixed privacy budget.

Finally, we conduct extensive experiments on a variety of downstream tasks, including range queries, Wasserstein-based fidelity, and three machine learning models (Random Forest, MLP, and XGBoost).
The results show that although the distributed setting introduces significantly more noise than the centralized one, the accuracy of downstream tasks remains comparable, with errors staying within the same order of magnitude rather than degrading proportionally to the noise.

Our contributions are summarized as follows:
\begin{itemize}
\item We propose HeteroFedSyn, the first differentially private tabular data synthesis framework for federated settings with heterogeneous data.
\item Within HeteroFedSyn, we first introduce an $l_2$-based dependency metric and develop the synthesis algorithm FedPrivSyn. We then propose AdaFedPrivSyn, which incorporates an adaptive mechanism to reduce redundancy and improve efficiency in marginal selection.
\item We conduct extensive experiments across diverse downstream tasks, validating the effectiveness and practicality of our approach.
\end{itemize}


\begin{table}[t]
  \footnotesize
  \centering
  \setlength{\abovecaptionskip}{1ex} 
  \caption{Important Notations}
  \begin{tabular}{c|c}
      \toprule
      \textbf{Variable}&\textbf{Description}\\
      \hline
       $c$& Number of participants\\
       \hline
       $c_i$& The $i^{\text{th}}$ participants\\
       \hline
       $D_i$& Local dataset held by $c_i$\\
       \hline
       $\hat D$& Global synthetic dataset\\
      \hline
      $n_i$& Number of samples held by each participant\\
      \hline
      $n$& Number of total samples $n=\sum_{i=1}^c n_i$\\
      \hline
      $d$& Number of dimensions of each sample\\
      \hline
      $k$& Projection dimension of $2$-way marginal\\
      \hline
      $M_a$/$\hat M_a$& 1-way marginal/noisy marginal of the $a^{\text{th}}$ attribute\\
      \hline
      $M_{a, b}$/$\hat M_{a, b}$& \makecell{2-way marginal/noisy marginal of\\ the $a^{\text{th}}$ and $b^{\text{th}}$ attributes}\\
      \hline
      $s_a,s_b$& Length of 1-way marginal of the $a^{\text{th}}$/$b^{\text{th}}$ attribute\\
      \hline
      $\text{InDif2}_{a,b}$& \makecell{Metric indicating the correlation\\ between the $a^{\text{th}}$ and $b^{\text{th}}$ attributes}\\
      \bottomrule
  \end{tabular}
  \label{tab:notations}
  \vspace{-3ex}
\end{table}

\section{Preliminaries}
\subsection{Problem Statement}
In this paper, we study privacy-preserving data synthesis in the horizontal federated setting.
Assume there are $c$ participants, each holding a local dataset $D_i$ with $n_i$ samples.
These datasets share the same attributes but contain records from different users and exhibit distributional bias, making any single $D_i$ statistically uninformative.
Moreover, each participant $c_i$ seeks to protect the privacy of individuals in $D_i$.

An untrusted server aims to generate a synthetic dataset $\hat D$ for downstream tasks (e.g., machine learning or range queries) by leveraging the global statistical characteristics of all local datasets.
The entire data-sharing process must ensure privacy protection.
All key notations are summarized in \autoref{tab:notations}.
Except for local datasets and marginals, parameters such as the number of participants, the size of each $D_i$, and the domain size of each attribute are assumed to be commonly known to all parties and the server.

\subsection{Privacy Definitions}
We utilize Differential Privacy techniques~\cite{DBLP:conf/icalp/Dwork06} to provide privacy protection for sensitive data, which prevents any sample from being inferred from the statistical results by limiting its influence on the final statistical outcomes.

\noindent\textbf{Differential Privacy.}
First, we present the definition of standard differential privacy as follows.

\begin{definition}\label{def:differential-privacy} (\emph{($\epsilon$, $\delta$)-Differential Privacy}).
	An algorithm $\mathcal{M}$ satisfies ($\epsilon,\delta$)-DP,where $\epsilon, \delta \geq 0$, if and only if for any neighboring datasets $D$ and $D'$ that differ in one element, 
	and any possible outputs $\mathcal{R} \subseteq Range(\mathcal{M})$, we have  
	\[\Pr \left[ \mathcal{M}(D) \in \mathcal{R}\right] \leq e^{\epsilon} \Pr \left[ \mathcal{M}(D') \in \mathcal{R} \right] + \delta.\]
\end{definition}
Regarding neighboring datasets $D$ and $D'$, we consider $D$ to have one more sample than $D'$ or one less sample than $D'$ in this paper.
Here, $\epsilon$ is called the \emph{privacy budget}. 
The smaller $\epsilon$ means that the outputs of $\mathcal{M}$ on $D$ and $D'$ are more similar, and thus the provided privacy guarantee is stronger. 
$\delta$ is generally interpreted as $\mathcal{M}$ not satisfying $\epsilon$-DP with probability $\delta$.

\noindent\textbf{Zero-Concentrated DP.}
Complex algorithms usually require multiple privacy compositions.
In this paper, we achieve a more concise and tighter privacy composition by converting the privacy guarantee to Zero-Concentrated Differential Privacy (zCDP).
The definition of zCDP is as follows.

\begin{definition}\label{def:zero-differential-privacy}(Zero-Concentrated Differential Privacy (zCDP)). A randomized mechanism $\mathcal{M}$ satisfies $\rho$-zCDP with parameter $\rho > 0$ if, for all neighboring datasets $D$ and $D'$ differing in a single element, and for all $\alpha > 1$, we have:
\[
D_{\alpha}(\mathcal{M}(D) \| \mathcal{M}(D')) \leq \rho \alpha,
\]
\end{definition}
\noindent where $D_{\alpha}(\mathcal{M}(D) \| \mathcal{M}(D'))$ is the $\alpha$-Rényi divergence between the distributions of $\mathcal{M}(D)$ and $\mathcal{M}(D')$.

\noindent\textbf{Gaussian Mechanism.}
The Gaussian mechanism is commonly used to achieve $(\epsilon, \delta)$-DP.
We use it as a building block mechanism in our design.

\begin{definition}(Gaussian Mechanism).
Given a function \( f: \mathcal{D} \to \mathbb{R}^d \), the \emph{Gaussian Mechanism} adds Gaussian noise to ensure differential privacy. Specifically, the Gaussian Mechanism is defined as:
\[
\mathcal{M}(D) = f(D) + \mathcal{N}(0, \sigma^2 I),
\]
\end{definition}  
\noindent where \( \mathcal{N}(0, \sigma^2 I) \) represents a multivariate Gaussian distribution with mean zero and covariance matrix \( \sigma^2 I \). The noise scale \( \sigma \) is chosen based on the sensitivity of \( f \) and the desired privacy parameters.
The \( \ell_2 \)-sensitivity of \( f \) is given by:
\[
\Delta_f = \max_{D, D'} \| f(D) - f(D') \|_2,
\]
\noindent where \( D \) and \( D' \) are neighboring datasets differing in a single element. 
To satisfy $(\epsilon, \delta)$-DP, \( \sigma \) is typically set to:
\[
\sigma = \frac{\Delta_f\sqrt{2\ln(1.25/\delta)}}{\epsilon}.
\]
To achieve zCDP with the parameter \( \rho \)~\cite{DBLP:conf/tcc/BunS16}, \( \sigma \) is set to:
\[
\sigma = \frac{\Delta_f}{\sqrt{2\rho}}.
\]
 
\noindent\textbf{Privacy Composition.}
We use zCDP for privacy composition.
zCDP has a clean composition property, which is additive under composition.
The formal definition is as follows.

\begin{definition}(Composition of zCDP~\cite{DBLP:conf/tcc/BunS16}) 
\label{def:composition}
For a sequence of mechanisms \(\mathcal{M}_i\) satisfying \(\rho_i\)-zCDP for \(i = 1, 2, \dots, k\), their composition $\mathcal{M} = (\mathcal{M}_1, \mathcal{M}_2, \dots, \mathcal{M}_k)$ satisfies $\sum_{i=1}^{k} \rho_i\text{-zCDP}$.
\end{definition}

Under the standard \((\epsilon, \delta)\)-DP framework, we can use the following conversion to interpret zCDP guarantees.
\begin{definition}(zCDP to $(\epsilon, \delta)$-DP)
If a mechanism \(\mathcal{M}\) satisfies \(\rho\)-zCDP, then it also satisfies \((\epsilon, \delta)\)-DP for any \(\delta > 0\) and:
\[
\epsilon \leq \rho + 2\sqrt{\rho \log(1/\delta)}.
\]
\end{definition}

\noindent\textbf{Post-Processing Property.}
The noisy results processed by the DP mechanism are robust, and their privacy guarantee remains intact after further analytical exploration.
\begin{definition}
\label{def:post-process}
    If a mechanism \(\mathcal{M}: \mathcal{D} \to \mathcal{R}\) satisfies \((\epsilon, \delta)\)-DP, then for any  function \(g: \mathcal{R} \to \mathcal{R'}\), the mechanism \( g \circ \mathcal{M} \) remains \((\epsilon, \delta)\)-DP.
\end{definition}

\subsection{PrivSyn}
PrivSyn is a classical method proposed by Zhang et al.~\cite{DBLP:conf/uss/00010LH0H0021} for generating differentially private tabular datasets under the centralized setting.
It generates tabular data with similar statistical characteristics by capturing low-dimensional information, i.e., $2$-way marginals, from the table and adding noise.
The key challenge in this process is that the number of attributes $d$ in the dataset is typically large.
Releasing all $2$-way marginals, which amount to $d(d-1)/2$, would be prohibitively large, resulting in statistical characteristics being overwhelmed by noise.
Therefore, allocating a portion of the privacy budget to select the most relevant attributes' marginals is essential.
We summarize the key algorithmic process of PrivSyn into the following three steps:
\begin{itemize}[leftmargin=*]
    \item \textbf{Dependency Measurement.}
    To identify the more ``valuable'' $2$-way marginals, PrivSyn introduces a metric called InDif to evaluate the dependency between any two attributes. For any two attributes $a$ and $b$, InDif calculates the $\ell_1$ distance between the $2$-way marginal $M_{a,b}$ and $2$-way marginal generated assuming independence $M_a\times M_b$, i.e.,
    \[\text{InDif}_{a,b}=\vert M_{a,b}-M_a\times M_b\vert.\]
    According to the composition theory (\autoref{def:composition}), all $m=d(d-1)/2$ InDif scores are published with Gaussian noise, i.e., $\sigma^2=\Delta_{InDif}m/(2\rho')$, where $\Delta_{InDif}=4$.
    \item \textbf{Marginal Selection.}
    Essentially, the process of marginal selection makes trade-off between noise error and dependency error.
    If a $2$-way marginal $(a, b)$ is selected, it incurs a noise error $\psi_{a,b}$; if not selected, it introduces a dependency error $\phi_{a,b}$.
    PrivSyn utilizes $\text{InDif}_{a,b}$ as the dependency error and solves this optimization problem by proposing a greedy method:
    \begin{equation}
    \label{equ:privsyn_select}
    \min \sum_{z\in Z}[\psi_z x_z+\phi_z(1-x_z)], x_z\in\{0, 1\},
    \end{equation}
    where $Z$ represents the set of all $2$-way marginals.
    Then, the selected $2$-way marginals are released for the subsequent data synthesis after adding noise. 
    \item \textbf{Data Synthesis.}
    PrivSyn constructs a flow graph and iteratively fits the noisy marginals by duplicating and replacing values in a randomly initialized dataset.
    This method is called GUM.
    To accelerate convergence, GUM only fits noisy $2$-way marginals and generates independent attribute of samples in one step based on noisy $1$-way marginals after convergence.
    
\end{itemize}

\begin{figure*}[ht]
    \centering
    \includegraphics[width=0.8\linewidth]{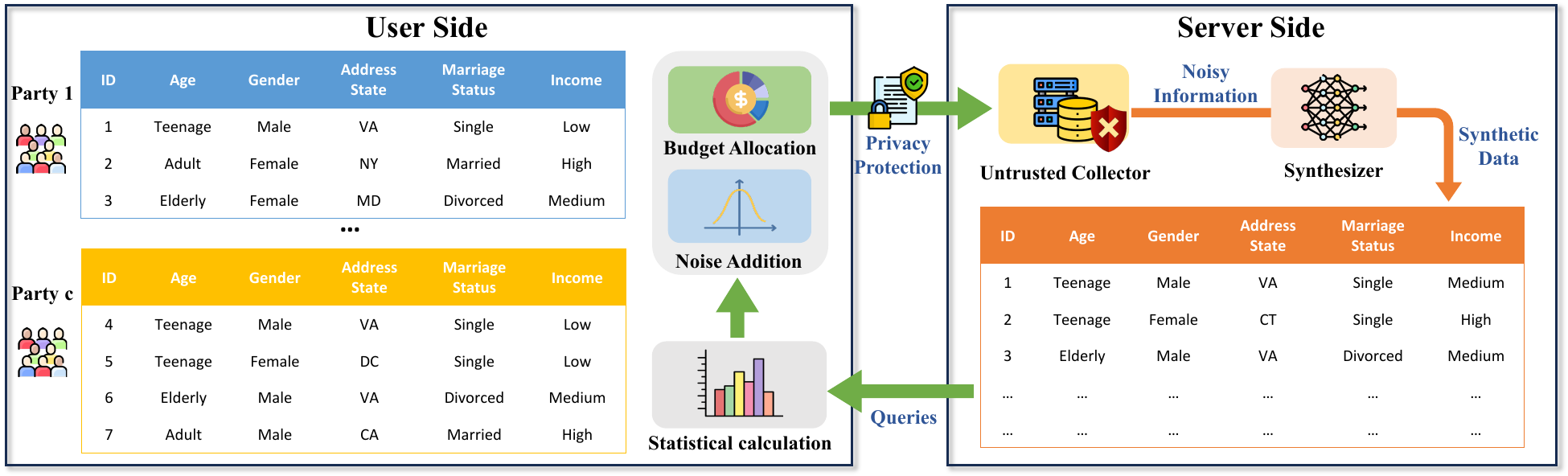}
    \caption{Overview of the horizontal federated data synthesis framework. On the user side, multiple participants hold different data sources with the same attributes. They extract information, add noise, and share it with the server, synthesizing a publishable dataset with similar statistical characteristics to the entire private data.}
    \label{fig:overview}
    \vspace{-1em}
\end{figure*}

\section{HeteroFedSyn}
In this section, we first construct a general horizontal federated tabular data synthesis framework.
HeteroFedSyn follows this framework and incorporates the algorithms for each of its components.

\subsection{Overview}
The overview of the horizontal federated data synthesis framework is shown in \autoref{fig:overview}, which contains user side and server side.
\begin{itemize}[leftmargin=*]
    \item \textbf{User Side.}
    The user side consists of $c$ clients, where private tabular data containing different samples with the same attributes are stored locally on each client (assuming the datasets have already been aligned using PSI technology~\cite{DBLP:conf/cans/CristofaroGT12, DBLP:conf/asiacrypt/KolesnikovRT019}).
    The user side never sends raw data to the server, instead, it extracts the statistical information by the \emph{Statistical Calculation} module, adds noise by the \emph{Noise Addition} module, and then transmits the noisy messages to the server.
    All privacy budget consumption is strictly controlled by the \emph{budget allocation} module, which follows the privacy composition theory (\autoref{def:composition}).
    \item \textbf{Server Side.}
    The \emph{Collector} on the server side gathers the noisy information sent by the clients and forwards it to the \emph{synthesizer} for fitting.
    This process may involve multiple rounds of interaction with the user side.
    Additionally, we assume that the collector honestly follows the algorithms but is curious and may attempt to infer user privacy.
    The synthetic dataset can be released for any downstream tasks, with the privacy guarantee ensured by the post-processing property (\autoref{def:post-process}).
\end{itemize}

\begin{figure*}[t]
    \centering
    \includegraphics[width=0.8\linewidth]{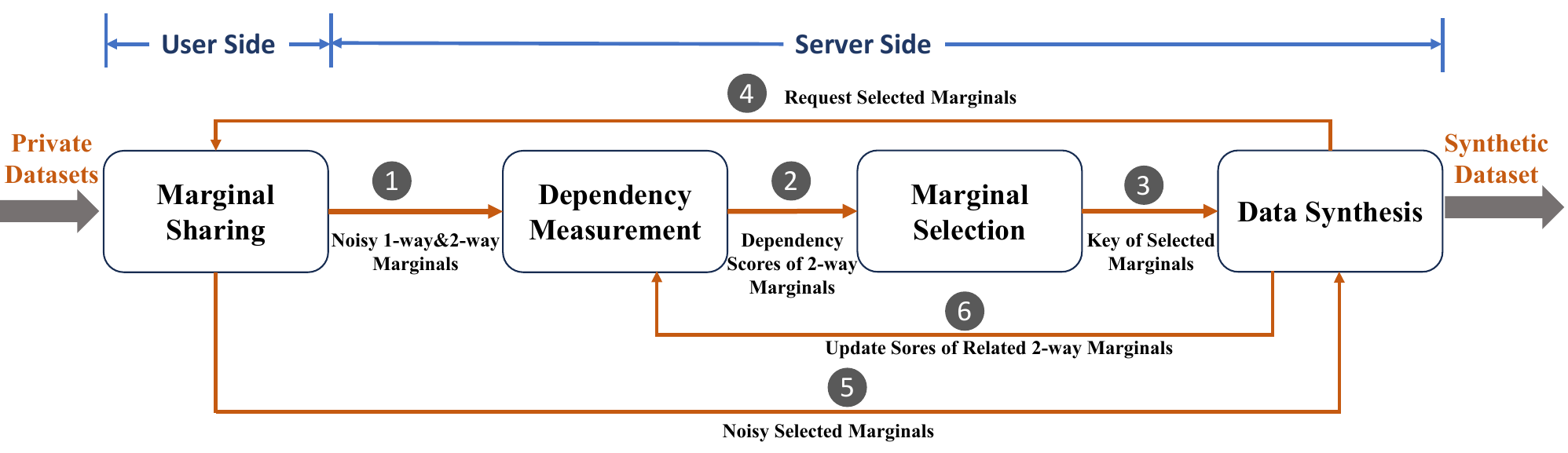}
    \caption{Workflow of HeteroFedSyn. The user side shares noisy $1$-way and $2$-way marginals with the server for dependency measurement. Based on dependency scores, the server selects $2$-way marginals for data synthesis. During the adaptive marginal selection process, HeteroFedSyn continuously updates dependency scores to guide subsequent marginal selection.}
    \label{fig:workflow}
    \vspace{-1em}
\end{figure*}

\subsection{Workflow of HeteroFedSyn}
We design HeteroFedSyn within the structure of the horizontal federated data synthesis framework.
The entire execution process consists of the following four main steps.

\begin{itemize}[leftmargin=*]
    \item \textbf{Marginal Sharing.}
    Different from the centralized setting, where all analysis and computation are performed on the raw dataset before release, the server-side analysis can only be based on the noisy information shared by the user side in the federated setting.
    At the beginning of the HeteroFedSyn, each participant on the user side needs to compute all $1$-way marginals and $2$-way marginals of their local dataset, which represent the probability distribution of individual attributes and the joint probability distribution of any two attributes. After adding noise, these marginals are sent to the server for aggregation to obtain an unbiased overall data distribution, which is crucial for the subsequent steps. Specifically, HeteroFedSyn utilizes a random projection matrix to compress all $2$-way marginals into a $k$-dimensional vector, avoiding the excessive communication overhead caused by transmitting the $2$-way marginals.
    \item \textbf{Dependency Measurement.}
    Next, HeteroFedSyn faces the challenge of measuring the dependency between attributes based on the noisy $1$-way and $2$-way marginals.
    HeteroFedSyn adopts PrivSyn's dependency evaluation metric, but modifies it from the $\ell_1$ norm to the $\ell_2$ norm.
    By leveraging the property of random projection matrix and performing debiasing operations, the server can obtain dependency scores for any pair of attributes.
    \item \textbf{Marginal Selection.}
    After obtaining the dependency scores for all $2$-way marginals, HeteroFedSyn selects the most ``valuable'' $2$-way marginals, which are then required from the user side after adding noise. This process filters out a large number of independent or weakly correlated $2$-way marginals, thereby preventing unnecessary privacy budget consumption.
    HeteroFedSyn includes two synthesis algorithms: FedPrivSyn and AdaFedPrivSyn, each employing a distinct marginal selection strategy.
    Unlike PrivSyn's static selection method, AdaFedPrivSyn adopts an adaptive approach for marginal selection.
    It continuously updates the initially computed dependency scores based on the currently selected marginals, which can further prevent wasting the privacy budget on redundant $2$-way marginals.
    \item \textbf{Data Synthesis.}
    Data synthesis is integrated throughout the entire workflow of HeteroFedSyn, rather than being performed only at the final stage after marginal selection, as in PrivSyn.
    With each $2$-way marginal selected, HeteroFedSyn performs data synthesis and adjusts the dependency scores of the relevant $2$-way marginals based on the synthesized dataset.
\end{itemize}

The workflow of HeteroFedSyn is summarized in \autoref{fig:workflow}.
\revision{Each participant first computes all $1$-way and $2$-way marginals on its local data and sends their compressed and noisy versions to the server, which uses only a small portion of the overall privacy budget (Step 1).
The server aggregates these statistics to obtain global marginals and computes dependency scores for all attribute pairs (Step 2).
Based on these scores, the server greedily selects the most informative $2$-way marginals that best capture attribute dependencies in the dataset (Step 3), and requests them from the clients (Step 4).
The clients then use the remaining majority of the privacy budget to perturb the selected $2$-way marginals and send them to the server for data synthesis (Step 5).
In the adaptive setting, the synthesized data is further used to update the dependency scores, which are fed back to guide subsequent marginal selection, enabling a dynamic selection process (Step 6).
In the following section, we provide a detailed description of the four main building blocks in HeteroFedSyn, and explain how the privacy budget is carefully managed throughout the entire process.}

\section{Building Blocks}

\subsection{Marginal Sharing}
\revision{As the first building block in the workflow of HeteroFedSyn, marginal sharing enables the server to obtain an initial understanding of global attribute distributions while preserving local data privacy.}
In HeteroFedSyn, marginal sharing on the user side occurs in two stages.
In the first stage, participants share all $1$-way and $2$-way marginals of their local datasets with the server for attribute dependency analysis.
In the second stage, after the server gains an understanding of the global dataset characteristics, it requests additional marginal information from the participants.
The first stage of marginal sharing consumes a small part of the privacy budget, reserving most of the privacy budget for the second stage to provide more accurate information for valuable marginal information.
For second-stage sharing, the user side only need to add Gaussian noise to the marginals requested by the server.
This section focuses on designing the initial stage of data sharing, which is essential but may suffer from excessive noise and communication overhead.

The marginal sharing algorithm executed by each participant is shown in \autoref{alg:marginal_share}.
All participants locally compute all $1$-way and $2$-way marginals and add Gaussian noise.
Each $2$-way marginal is multiplied by a random projection matrix ($d(d-1)/2$ in total) before adding noise, which is shared by all participants.
Next, we provide the technical details involved in the algorithm.

\setlength{\textfloatsep}{1ex}
\begin{algorithm}[h]
\small
	\caption{Marginal Sharing (Executed by Each Participant)}
	\label{alg:marginal_share}
	\begin{algorithmic}[1]
	    \REQUIRE { $\{P_{a,b}\in \mathbb{R}^{(s_a\cdot s_b)\times k}: a,b\in\{1,..., d\}, b>a\}$, where each i.i.d entry of $P_{a,b}$ drawn from $N(0, \frac{1}{k})$ and $k$ is projection dimension, local dataset $D_{i}$.}
        \STATE {Compute all 1-way marginals $\{M_a^{c_i}: a\in \{1,..., d\}\}$.}
        \STATE {Compute all 2-way marginals $\{M_{a,b}^{c_i}: a,b\in\{1,..., d\}, b>a\}$.}
        \STATE {Send $\{n_i\cdot (M_a^{c_i}+G_a): a\in \{1,..., d\}\}$} to the Server, where $G_a\in\mathbb{R}^{1\times s_a}$ is sampled from $N^{s_a}(0, \sigma_1^2)$.
        \STATE {Send $\{z_{a,b}^{c_i}=n_i\cdot (M_{a,b}^{c_i}P_{a,b}+G_{a,b}^{c_i}): a,b\in\{1,..., d\}, b>a\}\}$ to the server, where $G_{a,b}\in\mathbb{R}^{1\times k}$ is sampled from $N^k(0, \sigma_2^2)$.}
	\end{algorithmic}
\end{algorithm}
\vspace{-2ex}
\medskip
\noindent\textbf{Integrating biased local dataset.}
Since participants' data come from different sources, their distributions may vary significantly.
Conducting attribute dependency analysis on a biased local dataset would be meaningless.
As shown in \autoref{tab:gender_age_distribution}, suppose two local datasets contain the same two attributes, Gender and Age, and their one-way marginals are both [Male, 0.5, Female, 0.5], [Teenager, 0.5, Adult, 0.5].
Since Dataset \#1 primarily consists of adult males and teenage females, while Dataset \#2 mainly contains teenage males and adult females, the InDif scores calculated from these two datasets bias significantly from that of the merged global dataset.
This example further confirms the necessity of aggregating all biased local $2$-way marginals before conducting attribute dependency analysis, as aggregating only $1$-way marginals is insufficient.

Therefore, in HeteroFedSyn, all participants need to compute $1$-way and $2$-way marginals of their local datasets in the initial stage, multiply them by the size of their local dataset, and then send them to the server for aggregation.
The server sums these values and divides them by the total number of data points across all local datasets to obtain all marginals of the unbiased global dataset.

\begin{table}[t]
\footnotesize
    \centering
     \setlength{\abovecaptionskip}{1ex} 
    \caption{Examples of significant differences between local and global marginal distributions.}
    \label{tab:gender_age_distribution}
    \begin{tabular}{l |l |l}
        \toprule
        \textbf{Independent 2-way marg.} & \textbf{Actual 2-way marg.} & \textbf{InDif} \\
        \midrule
        \multicolumn{3}{c}{\textbf{Gender and Age Distribution at Dataset \#1}} \\
        \hline
        Male, Teenager\ 0.25& Male, Teenager\ 0.10 & \multirow{4}{*}{0.6}\\
        Male, Adult\ 0.25 & Male, Adult\ 0.40\\
        Female, Teenager\ 0.25 & Female, Teenager\ 0.40\\
        Female, Adult\ 0.25 & Female, Adult\ 0.10\\
        \hline
        \multicolumn{3}{c}{\textbf{Gender and Age Distribution at Dataset \#2}} \\
        \hline
        Male, Teenager 0.25 & Male, Teenager 0.40 &\multirow{4}{*}{0.6}\\
        Male, Adult 0.25 & Male, Adult 0.10\\
        Female, Teenager 0.25 & Female, Teenager 0.10\\
        Female, Adult 0.25 & Female, Adult 0.40\\
        \hline
        \multicolumn{3}{c}{\textbf{Gender and Age Global Distribution}} \\
        \hline
        Male, Teenager 0.25 & Male, Teenager 0.25& \multirow{4}{*}{0}\\
        Male, Adult 0.25 & Male, Adult 0.25\\
        Female, Teenager 0.25 & Female, Teenager 0.25\\
        Female, Adult 0.25 & Female, Adult 0.25\\
        \bottomrule
    \end{tabular}
\end{table}

\medskip
\noindent\textbf{Compressing with Random Projection.}
Sharing all marginals may result in excessive communication overhead, especially since $2$-way marginals may have very high dimensionality.
Specifically, for a dataset with $d$ attributes, the number of $2$-way marginals is $d(d-1)/2$.
For a $2$-way marginal $M_{a,b}$, where the domain size of attribute $a$ and $b$ are $s_a$ and $s_b$, the length of $M_{a,b}$ is $s_a\cdot s_b$.
The length of $M_{a,b}$ is always on the order of the squared value of $1$-way marginals.
To alleviate communication overhead, HeteroFedSyn employs random mapping techniques to reduce the dimensionality of all $2$-way marginals.
By compressing each $2$-way marginal from length $s_a \cdot s_b$ to $k$ dimensions, it not only reduces communication load, but also lowers the amount of noise added to each marginal.

According to Johnson-Lindenstrauss Lemma~\cite{lindenstrauss1984extensions}, given a data matrix $X\in \mathbb{R}^{n\times d}$, one can construct a random projection matrix $P\in \mathbb{R}^{d\times k}$ to obtain
\begin{equation}
\label{equ:projection}
    \overline X=XP\in\mathbb{R}^{n\times k},\ k\ll d.
\end{equation}
where the entries of the random matrix $P$ are i.i.d. and typically follow the Gaussian distribution or a Gaussian-like distribution, such as uniform.
It has been proven that, as long as $k$ is not too small ($k=\Omega(\log n/\lambda_{JL}^2)$, where $\lambda_{JL}$ is a small multiplicative value), two matrices multiplied by the same mapping matrix can preserve pairwise distances with high probability.
It is widely used as a dimensionality reduction tool in tasks involving distance estimation~\cite{DBLP:journals/jmlr/TomitaBSCPFPYBM20, DBLP:journals/corr/abs-2306-01751, DBLP:journals/jpc/KenthapadiKMM13}.

In the analysis of attribute dependency, $1$-way marginals are needed for further computational operations, while $2$-way marginals are only used for distance comparisons.
Therefore, HeteroFedSyn applies the random projection technique to compress all $2$-way marginals on the user side.
Since the length of each $2$-way marginal may vary, we sample a random projection matrix for each $2$-way marginal.
Specifically, for a pair of attributes ($a$, $b$), we sample a matrix $P_{a,b}\in \mathbb{R}^{(s_a\cdot s_b)\times k}$, where each entry is independently sampled from a Gaussian distribution $N(0, \frac{1}{k})$.
Each participant $c_i$ on the user side maps the $2$-way marginal $M_{a,b}^{c_i}$ using the same random projection matrix $P_{a,b}$ following \autoref{equ:projection}. 
To aggregate 2-way marginals of the same attribute pairs across participants, all parties must use the same projection matrices. 
In HeteroFedSyn, the server should generate and synchronize the $d(d-1)/2$ matrices to all participants before the data synthesis algorithm starts running.

\medskip
\noindent\textbf{Adding Gaussian Noise to Marginals.}
All $1$-way marginals and the dimension-reduced $2$-way marginals need to be perturbed with Gaussian noise to satisfy DP guarantee.
Given the allocated privacy budget $\rho_1$, for the $1$-way marginal, we have \autoref{the:1way_privacy} to ensure the privacy guarantee.


\begin{theorem}
\label{the:1way_privacy}
(Privacy Guarantee for 1-Way Marginals). Let $\Vert M_a^{c_i}-M_a^{c_i'}\Vert_2\le \Delta_1, \forall a\in \{1,..., d\}$, $i\in\{1,..., c\}$. Then for any $\rho_1>0$, $\hat M_a$ is $\rho_1$-zCDP if $\sigma_1\ge \Delta_1\sqrt{\frac{d}{2\rho_1}}$.
\end{theorem}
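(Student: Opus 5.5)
The plan is to treat the release of all $d$ noisy $1$-way marginals by a fixed participant $c_i$ as $d$ separate Gaussian mechanisms. We bound the zCDP cost of each one, then add them up with \autoref{def:composition}. Everything downstream is post-processing, and participants hold disjoint data.

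First fix a participant $c_i$ and an attribute $a$, and take the function $f_a(D_i)=M_a^{c_i}$. The statement assumes its $\ell_2$-sensitivity over neighboring local datasets $D_i,D_i'$ is at most $\Delta_1$. Line~3 of \autoref{alg:marginal_share} releases $M_a^{c_i}+G_a$ with $G_a\sim N^{s_a}(0,\sigma_1^2)$. By the Gaussian mechanism calibration for zCDP, this release satisfies $\frac{\Delta_1^2}{2\sigma_1^2}$-zCDP. This follows from the closed form $D_\alpha\big(N(\mu,\sigma^2I)\,\|\,N(\mu',\sigma^2I)\big)=\alpha\|\mu-\mu'\|_2^2/(2\sigma^2)$. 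Multiplying by the publicly known scalar $n_i$ is post-processing (\autoref{def:post-process}, which holds equally for Rényi divergences), so it does not change the guarantee.

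With $\sigma_1\ge \Delta_1\sqrt{d/(2\rho_1)}$, each of the $d$ releases is therefore $\frac{\rho_1}{d}$-zCDP. The $d$ releases use independent noise. \autoref{def:composition} then shows that participant $c_i$'s full $1$-way message is $d\cdot\frac{\rho_1}{d}=\rho_1$-zCDP with respect to $D_i$.

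It remains to lift this to the aggregated estimate $\hat M_a=\frac{1}{n}\sum_i n_i(M_a^{c_i}+G_a^{c_i})$. A neighboring global dataset changes exactly one record, which lies in exactly one local dataset $D_j$. The other participants' messages have the same distribution under both datasets and are independent of $D_j$. So the Rényi divergence of the joint output equals that of participant $j$'s message, which is at most $\rho_1\alpha$. Since $\hat M_a$ is a deterministic function of these messages and the public $n_i$, post-processing gives $\rho_1$-zCDP for $\hat M_a$, and indeed for all $d$ of them jointly.

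The main obstacle is a subtlety rather than a calculation. The neighboring relation is add/remove, so $n_i$ itself changes and the normalized marginal $M_a^{c_i}$ shifts in every coordinate. The theorem sidesteps this by assuming the bound $\Delta_1$, so I would simply invoke that hypothesis. I would also note that treating $n_i$ as public is the stated model assumption.
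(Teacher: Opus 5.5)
Your proposal is correct and follows the argument the paper leaves implicit. The paper gives no separate proof of this theorem; the factor $\sqrt{d/(2\rho_1)}$ comes from calibrating each of the $d$ Gaussian releases with $\sigma=\Delta_f/\sqrt{2\rho}$ at budget $\rho_1/d$ and then summing via zCDP composition. Your extra steps are sound under the paper's stated assumption that the local sizes $n_i$ are public. These steps are treating the scaling by $n_i$ and the aggregation into $\hat M_a$ as post-processing, and using disjointness of the local datasets so that only one participant's message changes. You rightly flag that assumption as the one the argument depends on.
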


Next, we provide the upper bound on the sensitivity $\Delta_1$ of the $1$-way marginal in \autoref{the:delta_1}.
\begin{theorem}
\label{the:delta_1}
(Bounding $\Delta_1$). 
Let $\Delta_1$ be the $\ell_2$-sensitivity for adding noise to the $1$-way marginals. We have $\Delta_1\le 1$.
\end{theorem}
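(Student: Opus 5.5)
The plan is to bound $\Delta_1$ directly from the effect of one record on the quantity a participant actually releases in Line~3 of \autoref{alg:marginal_share}. That quantity is the scaled marginal $n_i\cdot M_a^{c_i}$, i.e., the histogram of counts of attribute $a$ over $D_i$. Since $n_i$ is public, the noise is calibrated to this count vector. So I will argue that the $\ell_2$-sensitivity in \autoref{the:1way_privacy} should be read as the change of $n_iM_a^{c_i}$ between neighboring local datasets $D_i$ and $D_i'$, and bound that change by $1$.

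The first step is to write the count vector as a sum over records: $n_iM_a^{c_i}=\sum_{x\in D_i} e_{x_a}$, where $e_{x_a}\in\mathbb{R}^{1\times s_a}$ is the indicator vector of the value record $x$ takes on attribute $a$. Under the add/remove neighboring relation of \autoref{def:differential-privacy}, $D_i$ and $D_i'$ differ by exactly one record $x^\ast$. The difference of the two count vectors is therefore $\pm e_{x^\ast_a}$, a vector with a single nonzero entry equal to $\pm1$. Its $\ell_2$ norm is exactly $1$. This holds for every attribute $a$ and every participant $c_i$, since each record contributes to exactly one cell of each $1$-way marginal, giving $\Delta_1\le 1$.

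The main obstacle is a normalization subtlety rather than any calculation. If one insists on reading $M_a^{c_i}$ as the normalized frequency vector, then an add/remove neighbor changes $n_i$ itself, so $n_i$ is no longer a fixed public constant. In that case the cleanest justification is the count-vector view above: multiplying by the known $n_i$ is exactly what makes the released statistic a sum of per-record indicators. As a sanity check, I would also handle the replace-one-record variant with $n_i$ fixed. There the normalized marginal changes in at most two coordinates, by $\pm1/n_i$ each, giving norm $\sqrt2/n_i\le1$ whenever $n_i\ge2$. Hence the bound $\Delta_1\le1$ holds under either reading, and it plugs directly into the noise calibration of \autoref{the:1way_privacy}.
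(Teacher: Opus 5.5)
Your main argument bounds the sensitivity of the count vector $n_iM_a^{c_i}$, but the statement is about a different quantity. \autoref{the:1way_privacy} defines $\Delta_1$ through $\|M_a^{c_i}-M_a^{c_i'}\|_2$, i.e.\ the normalized marginal. That is also the scale at which the noise lives: in Line~3 of \autoref{alg:marginal_share}, $G_a\sim N(0,\sigma_1^2)$ is added to $M_a^{c_i}$ before scaling by $n_i$. On the count vector the noise is therefore $n_iG_a$, with standard deviation $n_i\sigma_1$, so your claim that ``the noise is calibrated to the count vector'' is not accurate. Calling the count-vector sensitivity $\Delta_1$ and plugging it into $\sigma_1\ge\Delta_1\sqrt{d/(2\rho_1)}$ mixes two scales. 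It happens to err on the safe side, but it is not what the theorem claims.

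Your fallback treats the normalized vector only under replace-one. The paper's neighboring relation, stated right after \autoref{def:differential-privacy}, is add/remove. So the one case the theorem actually concerns, the normalized marginal with $n_i\to n_i+1$, is exactly the case you set aside. Your conclusion that ``the bound holds under either reading'' is therefore not established for it.

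The obstacle you see there is not real; the paper simply computes through the change of denominator. Let $e_j$ be the indicator vector of the value $j$ that the added record takes on attribute $a$. Then $M_a^{c_i'}=(n_iM_a^{c_i}+e_j)/(n_i+1)$, so $M_a^{c_i}-M_a^{c_i'}=(M_a^{c_i}-e_j)/(n_i+1)$. Since $M_a^{c_i}$ is a probability vector, $\|M_a^{c_i}-e_j\|_2^2=\|M_a^{c_i}\|_2^2-2M_a^{c_i}[j]+1\le 2$. Hence the change is at most $\sqrt2/(n_i+1)\le 1$, and removal is the same with the two datasets' roles swapped.

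This is exactly the paper's coordinate computation, $\frac{1}{n_i(n_i+1)}\sqrt{\sum_q a_q^2+n_i(n_i-2a_j)}\le\frac{\sqrt2}{n_i}\le 1$, which uses $\sum_q a_q^2\le n_i^2$ and $n_i>1$. Adding this step makes your proof complete. The count-vector and replace-one observations can stay as side remarks.
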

\begin{proof}
Denote the count of items on attribute $a$ as $a_1,..., a_{s_a}$.
Assume that we add one sample to the dataset of participant $c_i$, the value of attribute $a$ belongs to the $j^{\text{th}}$ item $a_j$.
Additionally, the number of sample held by each participant $n_i$ should larger than $1$.
Then we have
\begin{align*}
&\Delta_1=\max{\Vert M_a^{c_i}- M_a^{c_i'} \Vert_2}\\
&=\max{\sqrt{\sum_{i=q, q\neq j}^{s_a}(\frac{a_q}{n_i}-\frac{a_q}{n_i+1})^2+(\frac{a_j}{n_i}-\frac{a_j+1}{n_i+1})^2}}\\
&=\max{\sqrt{\sum_{i=q, q\neq j}^{s_a}\frac{a_q^2}{n_i^2(n_i+1)^2}+\frac{(a_j-n_i)^2}{n_i^2(n_i+1)^2}}}\\
&=\max{\frac{1}{n_i(n_i+1)}\sqrt{\sum_{i=q}^{s_a} a_q^2+n_i(n_i-2a_j)}}\\
&\le \frac{\sqrt{2}}{n_i}\le 1
\end{align*}
\end{proof}

Given the allocated privacy budget $\rho_2$, we present the privacy guarantee of $2$-way marginal in \autoref{the:2way_privacy}.
\begin{theorem}
\label{the:2way_privacy}
(Privacy Guarantee for 2-Way Marginals) Let $\Vert M_{a,b}^{c_i}P_{a,b}-M_{a,b}^{c_i'}P_{a,b}\Vert_2\le \Delta_2, \forall a,b\in\{1,..., d\}, b>a, i\in\{1,..., c\}$. Then for any $\rho_2>0$, $z_{a,b}$ is $\rho_2$-zCDP if $\sigma_2\ge \Delta_2\sqrt{\frac{d(d-1)}{4\rho_2}}$.
\end{theorem}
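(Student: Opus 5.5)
The plan is to treat the release of all compressed $2$-way marginals by a participant as $m=d(d-1)/2$ separate Gaussian mechanisms, one per attribute pair $(a,b)$ with $b>a$. We then bound the zCDP cost of each and add them up using \autoref{def:composition}. This mirrors the argument for \autoref{the:1way_privacy}, where $d$ one-way marginals each receive budget $\rho_1/d$.

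\emph{Single pair.} Fix a pair $(a,b)$ and a participant $c_i$, and regard the query as $f_{a,b}(D_i)=M^{c_i}_{a,b}P_{a,b}\in\mathbb{R}^{1\times k}$. The projection matrix $P_{a,b}$ is drawn before the algorithm runs, is independent of the data, and is public. We therefore condition on it and treat $f_{a,b}$ as a deterministic function. By hypothesis its $\ell_2$-sensitivity over neighboring local datasets is at most $\Delta_2$. The Gaussian mechanism with noise $N^k(0,\sigma_2^2)$ is then $\Delta_2^2/(2\sigma_2^2)$-zCDP (the standard result of \cite{DBLP:conf/tcc/BunS16}, i.e.\ $\sigma=\Delta_f/\sqrt{2\rho}$). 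Scaling the noisy output by the public constant $n_i$ is post-processing (\autoref{def:post-process}, which holds equally for zCDP), so $z^{c_i}_{a,b}$ keeps the same guarantee.

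\emph{Composition.} The noise vectors for different pairs are independent, so by \autoref{def:composition} the total cost is
\[
m\cdot\frac{\Delta_2^2}{2\sigma_2^2}=\frac{d(d-1)}{2}\cdot\frac{\Delta_2^2}{2\sigma_2^2}.
\]
This is at most $\rho_2$ exactly when $\sigma_2^2\ge \Delta_2^2\, d(d-1)/(4\rho_2)$, which is the stated condition.

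\emph{From local to aggregate.} Each individual belongs to exactly one local dataset $D_i$. A neighboring global dataset therefore changes only one participant's inputs, and the other participants' messages have the same distribution under both datasets. The server's aggregate $\sum_i z^{c_i}_{a,b}$ is a post-processing of all the messages, so the guarantee for $z_{a,b}$ follows.

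\emph{Main obstacle.} The step that needs the most care is justifying that $P_{a,b}$ can be treated as fixed. The sensitivity hypothesis is stated for the realized $P_{a,b}$, so the guarantee is conditional on it. Because $P_{a,b}$ is independent of the data, the Rényi divergence bound holds for every realization and so also holds for the joint output including $P_{a,b}$. The computation itself is routine once the budget is split evenly as $\rho_2/m$ per pair.
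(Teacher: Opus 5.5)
Your proposal is correct and is the argument the paper relies on. The paper states this theorem without a written proof, but its constant $\sqrt{d(d-1)/(4\rho_2)}=\sqrt{m/(2\rho_2)}$ with $m=d(d-1)/2$ is exactly a per-pair Gaussian mechanism at budget $\rho_2/m$, composed additively over all pairs, just as \autoref{the:1way_privacy} splits $\rho_1$ over the $d$ one-way marginals. Your most delicate step, treating the rescaling by $n_i$ (and by $1/n$) as data-independent post-processing, is valid under the paper's stated assumption that local dataset sizes are public. That assumption does sit uneasily with the add/remove neighboring relation the paper uses in its sensitivity bounds, but this tension comes from the paper's model, not from your argument.
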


We provide an upper bound on the sensitivity $\Delta_2$ in \autoref{the:delta_2}.
\begin{theorem}
\label{the:delta_2}
(Bounding $\Delta_2$). 
Let $\Delta_2$ be the $\ell_2$ sensitivity for adding noise to the $2$-way marginals. We have 
\[\Delta_2\le \max_{1\le i\le s_as_b}\sqrt{\sum_{j=1}^k P_{a,b}[i, j]^2}\]
\end{theorem}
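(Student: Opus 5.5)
We bound the change in the projected marginal by writing it as a row vector times $P_{a,b}$. Let $v = M_{a,b}^{c_i}-M_{a,b}^{c_i'}\in\mathbb{R}^{1\times s_as_b}$, so that $M_{a,b}^{c_i}P_{a,b}-M_{a,b}^{c_i'}P_{a,b}=vP_{a,b}=\sum_{q=1}^{s_as_b} v_q P_{a,b}[q,:]$. This is a linear combination of the rows of $P_{a,b}$. The triangle inequality then gives
\[
\Vert vP_{a,b}\Vert_2 \le \sum_{q} |v_q|\,\Vert P_{a,b}[q,:]\Vert_2 \le \Vert v\Vert_1 \max_{q}\sqrt{\sum_{j=1}^k P_{a,b}[q,j]^2}.
\]
The whole proof therefore reduces to showing $\Vert v\Vert_1\le 1$, i.e., bounding the $\ell_1$ sensitivity of a normalized $2$-way marginal.

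For that step we mirror the proof of \autoref{the:delta_1}, now with $\ell_1$ in place of $\ell_2$. Let $x_1,\dots,x_{s_as_b}$ be the cell counts of the $2$-way histogram on $D_i$, with $\sum_q x_q=n_i$. Add one record falling into cell $j$. For $q\neq j$ the entry changes by $x_q/n_i-x_q/(n_i+1)=x_q/(n_i(n_i+1))$. The entry in cell $j$ changes by $x_j/n_i-(x_j+1)/(n_i+1)=(x_j-n_i)/(n_i(n_i+1))$, whose absolute value is $(n_i-x_j)/(n_i(n_i+1))$. Summing,
\[
\Vert v\Vert_1=\frac{(n_i-x_j)+(n_i-x_j)}{n_i(n_i+1)}=\frac{2(n_i-x_j)}{n_i(n_i+1)}\le \frac{2}{n_i+1}\le 1,
\]
using $n_i\ge 1$, the standing assumption from \autoref{the:delta_1}. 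Removing a record is the same computation with $n_i$ and $n_i+1$ swapped, so the same bound holds. Combining the two displays gives the claimed bound on $\Delta_2$ for each pair $(a,b)$.

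The main obstacle is choosing the right norm pairing. A naive spectral-norm bound $\Vert vP\Vert_2\le\Vert v\Vert_2\Vert P\Vert_{op}$ does not yield the stated row-norm maximum. The $\ell_1$/max-row-norm (Hölder-type) pairing does, but only because the normalized marginal has $\ell_1$ sensitivity at most $1$; this relies on the mass being redistributed by the renormalization, as the explicit computation shows. One point to handle explicitly is that $P_{a,b}$ is public and shared by all participants, so it is fixed (independent of the data) when computing the sensitivity.
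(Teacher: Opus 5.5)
Your proof is correct, and it departs from the paper's at the one step that matters. The paper writes $\Vert vP_{a,b}\Vert_2\le \Vert v\Vert_2\cdot\max_i\Vert P_{a,b}[i,:]\Vert_2$, citing Kenthapadi et al. It then reuses the $\ell_2$ computation from the proof of Theorem~\ref{the:delta_1}.

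That inequality is what the cited result gives when neighboring inputs differ in a single coordinate by at most $1$. Here, renormalizing by $n_i$ versus $n_i+1$ makes $v$ dense, and the inequality can then fail. Suppose all $n_i$ records lie in one cell $q_0\neq j$. Then $v=\frac{1}{n_i+1}(e_{q_0}-e_j)$ (standard basis vectors), which already has the largest $\ell_2$ norm an added record can produce. Take row $q_0$ of $P_{a,b}$ equal to $u$, row $j$ equal to $-u$, and all other rows zero. This gives $\Vert vP_{a,b}\Vert_2=\frac{2}{n_i+1}\Vert u\Vert_2=\sqrt2\,\Vert v\Vert_2\Vert u\Vert_2$, so even the max-over-neighbors form of the paper's step fails.

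Your pairing of $\Vert v\Vert_1$ with the maximum row norm, via the triangle inequality over the rows, is the correct extension of the single-coordinate bound. Your count $\Vert v\Vert_1=\frac{2(n_i-x_j)}{n_i(n_i+1)}\le\frac{2}{n_i+1}$ supplies exactly the sensitivity this pairing needs. So the stated bound on $\Delta_2$ is true, but your argument establishes it and the paper's does not. Had it been valid, the paper's route would only change the intermediate constant ($\sqrt2/n_i$ from Theorem~\ref{the:delta_1} instead of your $2/(n_i+1)$). The statement uses neither constant beyond the bound $1$.

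One small fix concerns the removal direction. There your swap gives $\Vert v\Vert_1\le 2/n_i$ for a local dataset of size $n_i$, so you need $n_i\ge 2$, which is the paper's assumption $n_i>1$. In other words, both neighboring local datasets must be nonempty for the normalized marginals to be defined. The condition $n_i\ge 1$ alone covers only the addition case.
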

\begin{proof}
Denote the count of items on attribute $a$, $b$ as $a_1,..., a_{s_a}$ and $b_1,..., b_{s_b}$, respectively.
Then the domain size of $2$-way marginal on attribute $a$ and $b$ is $s_as_b$, denote the count of items as $c_1,...,c_{s_as_b}$.
Assume we add one sample to the $i^{\text{th}}$ user's datasets, the value of attribute $a$ and $b$ belong to $i_0^\text{th}$ and $j_0^\text{th}$ items, respectively. We have
\begin{align*}
\Delta_2&=\max{\Vert M_{a,b}^{c_i}P_{a,b}-M_{a,b}^{c_i'}P_{a,b}\Vert_2}\\
&\le \max{\Vert  M_{a,b}^{c_i}-M_{a,b}^{c_i'}\Vert_2}\cdot\max_{1\le i\le s_as_b}\sqrt{\sum_{j=1}^k P[i, j]^2}\quad \text{(based on \cite{DBLP:journals/jpc/KenthapadiKMM13})} \\
&=\sqrt{\sum_{(i,j)\neq(i_0,j_0)}(\frac{c_{i,j}}{n_i}-\frac{c_{i,j}}{n_i+1})^2+(\frac{c_{i_0,j_0}}{n_i}-\frac{c_{i_0,j_0}+1}{n_i+1})^2}\\
&\quad\cdot\max_{1\le i\le s_as_b}\sqrt{\sum_{j=1}^k P[i, j]^2}
\le \max_{1\le i\le s_as_b}\sqrt{\sum_{j=1}^k P[i, j]^2}.
\end{align*}

\end{proof}

\subsection{Dependency Measurement}
\revision{Following the initial marginal sharing stage, the server performs dependency measurement on the received noisy $1$-way and $2$-way marginals to identify $2$-way marginals with strong attribute correlations.
Such strongly correlated marginals play a more critical role in capturing the underlying statistical structure of the dataset than weakly correlated ones.
By allocating a small portion of the privacy budget to this dependency analysis, HeteroFedSyn can preserve the majority of the privacy budget for accurately acquiring the most informative marginals, rather than expending it on a large number of unnecessary marginals.}

HeteroFedSyn still follows InDif, the dependency measure proposed in PrivSyn~\cite{DBLP:conf/uss/00010LH0H0021}, but with modifications.
We compute the $\ell_2$ distance between the actual $2$-way marginals and the joint distribution of marginals without considering dependencies.
\begin{equation}
\label{equ:indif_2}
    \text{InDif2}_{a,b}=\Vert M_{a,b}-M_{a}\times M_{b}\Vert_2
\end{equation}

\revision{Unlike the centralized setting, where $\text{InDif2}_{a,b}$ can be computed directly from the raw data and noise added afterward, in the federated scenario, the server can only access the noisy version of the $1$-way marginals and the noisy version of the $2$-way marginals, which have been compressed through random projection.
Here, the computation of the $\ell_2$-norm distance between marginals is essentially intended to align with the property that compressed noisy marginals preserve the $\ell_2$-norm distance.
Directly applying the formula $\hat M_{a,b} - \hat M_a \times \hat M_b$ on these noisy marginals does not yield a correct $\text{InDif2}_{a,b}$, because the noise interacts in a complex way.
Therefore, the biggest challenge in HeteroFedSyn is to extract an unbiased estimate of $\text{InDif2}_{a,b}$ from these “distorted” marginals.
In particular, simply expanding $\hat M_{a,b} - \hat M_a \times \hat M_b$ would generate terms involving the original $M_a$ and $M_b$, which the server cannot access; \autoref{the:est_dif} shows how to cleverly use only the noisy marginals to cancel out these extra terms and obtain a valid unbiased estimate.}

\begin{algorithm}[t]
\small
	\caption{Dependency Measurement}
	\label{alg:dependency_calc}
	\begin{algorithmic}[1]
	    \REQUIRE {$\{P_{a,b}\in \mathbb{R}^{(s_a\cdot s_b)\times k}: a,b\in\{1,..., d\}, b>a\}$, where each i.i.d entry of $P_{a,b}$ drawn from $N(0, \frac{1}{k})$ and $k$ is projection dimension.}
        \STATE {Compute $\{\hat M_a=\frac{1}{n}\sum_{i=1}^c n_i\cdot (M_a^{c_i}+G_a): a\in \{1,..., d\}\}$.}
        \STATE {Compute $\{z_{a,b}=\frac{1}{n}\sum_{i=1}^c z_{a,b}^{c_i}: a,b\in\{1,..., d\}, b>a\}$.}
        \STATE {Compute $\{z_{a*b}=(\hat M_a\times \hat M_b)P_{a,b}: a,b\in\{1,..., d\}, b>a\}.$}
        \STATE {Compute $\{\text{InDif2}_{a,b}^2=\big\|z_{a,b}-z_{a*b}\big\|_2^2 - \Big[
k\alpha\sigma_2^2 + \alpha\sigma_1^2\big(s_b\|\hat M_a\|_2^2+s_a\|\hat M_b\|_2^2\big)
- s_a s_b \alpha^2 \sigma_1^4
\Big]: a,b\in\{1,..., d\}, b>a\}$, $\alpha=\frac{\sum_i n_i^2}{n^2}$.}
        \ENSURE {$\{\text{InDif2}_{a,b}: a,b\in\{1,..., d\}, b>a\}$}
	\end{algorithmic}
\end{algorithm}

\revision{The detailed process for dependency measurement on the server side is shown in \autoref{alg:dependency_calc}.}
The server first needs to aggregate the noisy $1$-way marginals and the compressed noisy $2$-way marginals received from the participants.
The aggregation is based on a proportional weighting according to the amount of data each participant holds.
To align the compressed noisy $2$-way marginals, the joint distribution $M_a\times M_b$ computed from noisy $1$-way marginal must also be mapped using the same projection matrix $P_{a,b}$.
After aggregation of the noisy marginal information, the server constructs an unbiased estimator of $\text{InDif2}_{a,b}$ based on it.
\revision{Next, we present how \autoref{the:est_dif} derives an unbiased estimate of $\text{InDif2}_{a,b}$ for any attribute pair $(a,b)$ using the noisy marginals $z_{a,b}$ and $z_{a*b}$}.

\begin{theorem}
\label{the:est_dif}
(Unbiased estimation of $\text{InDif2}_{a,b}$). $\big\|z_{a,b}-z_{a*b}\big\|_2^2 - \Big[
k\alpha\sigma_2^2 + \alpha\sigma_1^2\big(s_b\|\hat M_a\|_2^2+s_a\|\hat M_b\|_2^2\big)
- s_a s_b \alpha^2 \sigma_1^4
\Big]$ is an unbiased estimator of the square of $\text{InDif2}_{a,b}$, where $\alpha=\frac{\sum_i n_i^2}{n^2}$.
\end{theorem}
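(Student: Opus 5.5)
The plan is to expand the squared norm into a signal term plus noise terms and compute its expectation jointly over the Gaussian noise and the random projection $P_{a,b}$. In that expectation every cross term vanishes, and the leftover noise-energy terms are exactly cancelled by the bracketed correction.

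\textbf{Step 1: aggregated noise.} I assume the participants draw their Gaussian noise independently of one another and independently of $P_{a,b}$. Write the aggregated noisy $1$-way marginal as $\hat M_a = M_a + g_a$ with $g_a=\frac1n\sum_i n_i G_a^{(i)}$. Here $M_a=\frac1n\sum_i n_iM_a^{c_i}$ is the true global marginal, and each coordinate of $g_a$ is i.i.d.\ $N(0,\alpha\sigma_1^2)$ with $\alpha=\sum_i n_i^2/n^2$. In the same way, $z_{a,b}=M_{a,b}P_{a,b}+h$, where the coordinates of $h$ are i.i.d.\ $N(0,\alpha\sigma_2^2)$ and independent of everything else. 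Expanding the outer product gives
\[
z_{a,b}-z_{a*b}=(v+E)P_{a,b}+h,\qquad v=M_{a,b}-M_a\times M_b,\quad E=M_a\times g_b+g_a\times M_b+g_a\times g_b .
\]
The target quantity is $\text{InDif2}_{a,b}^2=\|v\|_2^2$.

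\textbf{Step 2: expectation of the squared norm.} First, $h$ is independent of the other terms and has mean zero, so its cross term disappears and $\mathbb{E}\|h\|_2^2=k\alpha\sigma_2^2$. Next, for any fixed row vector $u$, the entries of $P_{a,b}$ are i.i.d.\ $N(0,1/k)$, so $\mathbb{E}_P\|uP_{a,b}\|_2^2=\|u\|_2^2$. Conditioning on the noise therefore reduces the remaining part to $\mathbb{E}\|v+E\|_2^2$.

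The cross term $\mathbb{E}\langle v,E\rangle$ is zero: $g_a,g_b$ have mean zero, and $\mathbb{E}[g_a\times g_b]=0$ because $g_a$ and $g_b$ are independent.

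For $\mathbb{E}\|E\|_2^2$, the pairwise cross terms among the three pieces of $E$ vanish, since each involves an odd power of some zero-mean Gaussian. The diagonal pieces give
\[
\mathbb{E}\|E\|_2^2=\alpha\sigma_1^2\bigl(s_b\|M_a\|_2^2+s_a\|M_b\|_2^2\bigr)+s_as_b\alpha^2\sigma_1^4 .
\]

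\textbf{Step 3: correct the correction term.} The server only knows $\hat M_a$, not $M_a$, and $\mathbb{E}\|\hat M_a\|_2^2=\|M_a\|_2^2+s_a\alpha\sigma_1^2$. Hence
\[
\mathbb{E}\Bigl[\alpha\sigma_1^2\bigl(s_b\|\hat M_a\|_2^2+s_a\|\hat M_b\|_2^2\bigr)\Bigr]=\alpha\sigma_1^2\bigl(s_b\|M_a\|_2^2+s_a\|M_b\|_2^2\bigr)+2s_as_b\alpha^2\sigma_1^4 .
\]
This overshoots the $\sigma_1^4$ term by exactly one copy of $s_as_b\alpha^2\sigma_1^4$, which explains the $-s_as_b\alpha^2\sigma_1^4$ in the bracket. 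Subtracting the bracket's expectation from $\|v\|_2^2+k\alpha\sigma_2^2+\mathbb{E}\|E\|_2^2$ leaves exactly $\|v\|_2^2$, which proves the claim.

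\textbf{Main obstacle.} The delicate step is the fourth-moment bookkeeping in $\mathbb{E}\|E\|_2^2$ together with the compensating bias of $\|\hat M_a\|_2^2$. One must verify that terms such as $\mathbb{E}\langle M_a\times g_b,\,g_a\times g_b\rangle$ vanish by independence and odd moments, and track how the $\sigma_1^4$ contributions combine. A secondary point to state explicitly is that unbiasedness is with respect to the joint randomness of the noise and $P_{a,b}$. Conditionally on a fixed $P_{a,b}$, the estimator would target $\|vP_{a,b}\|_2^2$, and the $\sigma_1$ terms would carry $P_{a,b}$-dependent weights rather than the clean factors $s_a$, $s_b$, $k$.
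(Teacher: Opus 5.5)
Your proposal is correct and follows essentially the same route as the paper's proof. It splits $z_{a,b}-z_{a*b}$ into the projected signal $(M_{a,b}-M_a\times M_b)P_{a,b}$ plus the aggregated noise terms, kills the cross terms by independence and zero means, uses $\mathbb{E}\|xP_{a,b}\|_2^2=\|x\|_2^2$ and $\|u\times v\|_2^2=\|u\|_2^2\|v\|_2^2$, and absorbs the bias $s_a\alpha\sigma_1^2$ of $\|\hat M_a\|_2^2$ (and likewise for $b$) into the $-s_as_b\alpha^2\sigma_1^4$ term. There is one harmless sign slip: since $\hat M_a\times\hat M_b=M_a\times M_b+E$, the expansion is actually $z_{a,b}-z_{a*b}=(v-E)P_{a,b}+h$, which changes nothing because $\mathbb{E}\langle v,E\rangle=0$ and $\|{-E}\|_2=\|E\|_2$.
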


\begin{proof}
We first recall the setup and notations.
Each client $c_i$ has a dataset of size $n_i$ with $n=\sum_i n_i$.
Each client sends the following to the server:
1-way marginals $n_i\big(M_a^{c_i}+G_a^{c_i}\big)$ with $G_a^{c_i}\sim\mathcal N(0,\sigma_1^2 I_{s_a})$;
2-way marginals $z_{a,b}^{c_i}=n_i\big(M_{a,b}^{c_i}P_{a,b}+G_{a,b}^{c_i}\big)$ with $G_{a,b}^{c_i}\sim\mathcal N(0,\sigma_2^2 I_k)$.

Define $M_a = \frac{1}{n}\sum_i n_i M_a^{c_i}$, $G_a = \frac{1}{n}\sum_i n_i G_a^{c_i}$, $G_{a,b} = \frac{1}{n}\sum_i n_i G_{a,b}^{c_i}$, and $\alpha=\frac{\sum_i n_i^2}{n^2}$.
Then we have
$$G_a\sim\mathcal N(0,\alpha\sigma_1^2 I_{s_a}),$$
$$G_{a,b}\sim\mathcal N(0,\alpha\sigma_2^2 I_k).$$

Server aggregates 
$$\hat M_a=\frac{1}{n}\sum_i n_i(M_a^{c_i}+G_a^{c_i})=M_a+G_a,$$ 
$$z_{a,b}=\frac1n\sum_i z_{a,b}^{c_i}=M_{a,b}P_{a,b}+G_{a,b}.$$
We then analyze $E\big[\|z_{a,b}-z_{a*b} \|_2^2\big]$, where
\begin{align*}
&z_{a,b}-z_{a*b}=(M_{a,b}-M_a\times M_b)P_{a,b}\\
&+\Big(G_{a,b}-(M_a\times G_b)P_{a,b}-(G_a\times M_b)P_{a,b}-(G_a\times G_b)P_{a,b}\Big). 
\end{align*}

Since $G_a$, $G_b$, and $G_{a,b}$ are independent, mean-zero Gaussians and independent of $P_{a,b}$, we have
\begin{align*}
&\mathbb E\big[\|z_{a,b}-z_{a*b} \|_2^2\big]=E\big[\|(M_{a,b}-M_a\times M_b)P_{a,b}\|_2^2\big]\\
&+ E\big[\|G_{a,b}\|_2^2\big] + E\big[\|(M_a\times G_b)P_{a,b}\|_2^2\big] + E\big[\|(G_a\times M_b)P_{a,b}\|_2^2\big]\\
&+ E\big[\|(G_a\times G_b)P_{a,b}\|_2^2\big].
\end{align*}
We then evaluate each variance term. We use two useful properties: $\mathbb E\|xP\|_2^2=\|x\|_2^2$ (Johnson-Lindenstrauss Lemma~\cite{lindenstrauss1984extensions}), and $\|u\times v\|_2^2=\|u\|_2^2\cdot\|v\|_2^2$. Then we have
$$
\mathbb E\|(M_{a,b}-M_a\times M_b)P_{a,b}\|_2^2=\|M_{a,b}-M_a\times M_b\|_2^2,
$$
$$
\mathbb E\|G_{a,b}\|_2^2=k\alpha\sigma_2^2,
$$
$$
\mathbb E\|(M_a\times G_b)P_{a,b}\|_2^2
=\mathbb E\|M_a\times G_b\|_2^2
=\|M_a\|_2^2\cdot \mathbb E\|G_b\|_2^2
=\|M_a\|_2^2\cdot s_b\alpha\sigma_1^2.
$$
Similarly,
$$
\mathbb E\|(G_a\times M_b)P_{a,b}\|_2^2
=\|M_b\|_2^2\cdot s_a\alpha\sigma_1^2,
$$
$$
\mathbb E\|(G_a\times G_b)P_{a,b}\|_2^2
=\mathbb E\|G_a\|_2^2\,\mathbb E\|G_b\|_2^2
=s_a s_b\alpha^2\sigma_1^4.
$$
Therefore,
\begin{align*}
&E\big[\|z_{a,b}-z_{a*b} \|_2^2\big]=\|M_{a,b}-M_a\times M_b\|_2^2\\ 
&+k\alpha\,\sigma_2^2+\alpha\sigma_1^2\big(s_b\|M_a\|_2^2+s_a\|M_b\|_2^2\big)
+ s_a s_b\,\alpha^2\sigma_1^4.   
\end{align*}
Since $\hat M_a = M_a + G_a$, we have $\mathbb E\|\hat M_a\|_2^2=\|M_a\|_2^2+s_a\alpha\sigma_1^2$. Similarly, $\mathbb E\|\hat M_b\|_2^2=\|M_b\|_2^2+s_b\alpha\sigma_1^2$. Therefore, we have
\begin{align*}
&E\big[\|z_{a,b}-z_{a*b} \|_2^2\big]
=\|M_{a,b}-M_a\times M_b\|_2^2\\
&\quad +k\alpha\,\sigma_2^2+\alpha\sigma_1^2\Big(s_b \mathbb E\|\hat M_a\|_2^2+s_a\mathbb E\|\hat M_b\|_2^2\Big)
- s_a s_b\,\alpha^2\sigma_1^4.
\end{align*}
The Unbiased estimator is 
$\widehat{\text{InDif2}}^{2}_{a,b} =
\big\|z_{a,b}-z_{a*b}\big\|_2^2 - \Big[
k\alpha\sigma_2^2 + \alpha\sigma_1^2\big(s_b\|\hat M_a\|_2^2+s_a\|\hat M_b\|_2^2\big)
- s_a s_b \alpha^2 \sigma_1^4
\Big]$ where $\alpha=\frac{\sum_i n_i^2}{n^2}$, and we have $\mathbb E\big[\widehat{\text{InDif2}}^{\,2}_{a,b}\big]
=\|M_{a,b}-M_a\times M_b\|_2^2
=\text{InDif2}_{a,b}^2.$
\end{proof}

\begin{algorithm}[t]
\small
	\caption{Adaptive Marginal Selection}
	\label{alg:marginal_select}
	\begin{algorithmic}[1]
	    \REQUIRE {Number of attributes $d$, set of $2$-way marginals $Z=\{z_1,...,z_{d(d-1)/2}\}$, set of InDif2 scores $\{\phi_z^0\}_{z\in Z}$, set of noise errors $\{\psi_z\}_{z\in Z}$.}
        \STATE {Selected marginals $X\leftarrow \emptyset$, $E_0\leftarrow \sum_{z\in Z}\phi_z$, $t\leftarrow 0$}
        \WHILE {True}
        \FOR{each $2$-way marginal $z\in Z$}
        \STATE {$E_z^t=\sum_{j\in X\cup\{z\}}\psi_z+\sum_{j\in Z\setminus\{z\}}\phi_z$}
        \ENDFOR
        \STATE {$z^*\leftarrow \arg \min_{z\in Z}E_z^t$}
        \STATE {$E^t\leftarrow E_{z^*}$}
        \IF {$E^t\ge E^{t-1}$}
        \STATE {\textbf{Break}}
        \ENDIF
        \STATE {$X\leftarrow X\cup\{z^*\}$}
        \STATE {$\{\phi_z^t\}_{z\in Z}\leftarrow$ Update the set of InDif2 scores}
        \STATE {$t\leftarrow t+1$.}
        \ENDWHILE
        \ENSURE {$X$}
	\end{algorithmic}
\end{algorithm}

\begin{algorithm}[t]
\small
	\caption{Update InDif2 Scores}
	\label{alg:update_score}
	\begin{algorithmic}[1]
	    \REQUIRE {Selected marginals $X$, set of InDif2 scores $\{\phi_z\}_{z\in Z}$.}
        \STATE {Generate synthetic data $\hat D_t$ using $X$}
        \STATE {Compute 2-way marginals $\{\tilde M_{a,b}^{t}: a,b\in\{1,..., d\}, b>a\}$ from $\hat D_t$}
        \STATE {$\{\phi_z^t\}_{z\in Z}\leftarrow\{\text{InDif2}_{z}^t\leftarrow \Vert z_{a,b}-\tilde M_{a,b}^tP_{a,b} \Vert_2: a,b\in\{1,..., d\}, b>a\}\}$}
        \COMMENT{$\bigtriangledown$ \revision{Update $\text{InDif2}$ from the current synthetic data.}}
        \STATE {$Att\leftarrow \cup_{(a,b)\in X}\{a, b\}$} \COMMENT{$\bigtriangledown$ Extracts all distinct attributes contained in $X$}
        \FOR{each 2-way marginal $z\in Z$}
        \IF {$z\cap Att = \emptyset$ Or $\phi_z^t> \phi_z^{t-1}$}
        \STATE {$\phi_z^t\leftarrow \phi_z^{t-1}$} \COMMENT {$\bigtriangledown$ Only update InDif2 of $z$ related to $X$}
        \ENDIF
        \ENDFOR
        \ENSURE {$\{\phi_z^t\}_{z\in Z}$}
	\end{algorithmic}
\end{algorithm}

\subsection{Marginal Selection \& Data Synthesis}
\revision{Building upon the dependency scores $\text{InDif2}$ computed for all attribute pairs, the server proceeds to marginal selection and data synthesis.
In this stage, a subset of $2$-way marginals is selected to guide the construction of the synthetic dataset.
The selection can be directly performed using the greedy algorithm proposed in PrivSyn, as shown in \autoref{equ:privsyn_select}, which selects all marginals in a single round based on their dependency scores.
We refer to this strategy as \emph{non-adaptive} marginal selection.}

Consider the following example: In an attribute set, there exist three highly correlated attributes A, B, and C.
The $\text{InDif2}$ values for attribute pairs (A, B), (B, C), and (A, C) are higher than those of other attribute pairs, making them highly likely to be selected.
However, when (A, B) and (B, C) are already selected, the marginal information of (A, C) is implicitly contained in the combination of $M_{a,b}$ and $M_{b,c}$.
In this case, the $\text{InDif2}_{a,c}$ decreases after adjusting attributes on $M_{a,b}$ and $M_{b,c}$.
There may be other attribute pairs whose $\text{InDif2}$ is higher than updated $\text{InDif2}_{a,c}$.
Allocating the privacy budget to release other $2$-way marginals would be more beneficial for preserving the correlations in the data synthesis process.
\revision{The advantage of adaptive over static marginal selection is theoretically justified by Chen et al.~\cite{chen2025benchmarking}, which shows that $k$ already selected marginals can provide additional information for evaluating any $2$-way marginal beyond independent measurements.}
\revision{
\begin{theorem}\label{the:adapt-measure}
~\cite{chen2025benchmarking}
For any pair of attributes $(A_i, A_j)$, the KL divergence of conditional estimation is no larger than that of independent estimation:
\begin{equation}\label{eq:adapt-measure}
\mathrm{KL}\!\left(P_{ij} \,\|\, \hat P_{ij}\right)
\;\le\;
\mathrm{KL}\!\left(P_{ij} \,\|\, P_i P_j\right),
\end{equation}
where
\[
\hat P_{ij}
= \sum_{\mathbf A} P(\mathbf A)\, P(A_i\mid \mathbf A)\,P(A_j\mid \mathbf A).
\]
\end{theorem}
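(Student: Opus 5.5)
The plan is to read both sides as Kullback--Leibler divergences between joint laws that involve the conditioning attributes $\mathbf A$, and then reduce the comparison to one between mutual informations.

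\textbf{Step 1 (right-hand side).} By definition, $\mathrm{KL}(P_{ij}\,\|\,P_iP_j)=I(A_i;A_j)$.

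\textbf{Step 2 (lift the left-hand side).} The estimate $\hat P_{ij}$ is the $(A_i,A_j)$-marginal of the lifted law
\[
Q(a_i,a_j,\mathbf a)=P(\mathbf a)\,P(a_i\mid\mathbf a)\,P(a_j\mid\mathbf a),
\]
while $P_{ij}$ is the corresponding marginal of the true joint $P(a_i,a_j,\mathbf a)$. By the data-processing inequality, marginalizing out $\mathbf A$ cannot increase KL divergence. Hence
\[
\mathrm{KL}(P_{ij}\,\|\,\hat P_{ij})\le \mathrm{KL}(P_{ij\mathbf A}\,\|\,Q).
\]
Expanding the log-ratio, $\log\frac{P(a_i,a_j\mid\mathbf a)}{P(a_i\mid\mathbf a)P(a_j\mid\mathbf a)}$, identifies the right-hand side as exactly $I(A_i;A_j\mid\mathbf A)$.

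\textbf{Step 3 (main obstacle).} It remains to show $I(A_i;A_j\mid\mathbf A)\le I(A_i;A_j)$. This is false in general: conditioning can increase mutual information. The plan therefore has to make explicit the structural assumption under which the inequality holds. The natural choice, suited to adaptive selection where $\mathbf A$ is reached through an already-selected marginal containing $A_i$, is the Markov chain $A_j - A_i - \mathbf A$, i.e.\ $I(A_j;\mathbf A\mid A_i)=0$. Under this assumption, expand $I(A_j;A_i,\mathbf A)$ by the chain rule in both orders:
\[
I(A_j;A_i)+I(A_j;\mathbf A\mid A_i)=I(A_j;\mathbf A)+I(A_j;A_i\mid\mathbf A).
\]
The second term on the left vanishes, so $I(A_i;A_j\mid\mathbf A)=I(A_i;A_j)-I(A_j;\mathbf A)\le I(A_i;A_j)$. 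Chaining this with Steps 1 and 2 gives the claim.

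I would state this hypothesis (or an equivalent one, e.g.\ the interaction information $I(A_i;A_j;\mathbf A)\ge 0$) explicitly, because without it the inequality can fail. As a counterexample, take $A_i,A_j$ to be independent fair bits and $\mathbf A=\mathbb 1[A_i=A_j=1]$. Then the right-hand side is $0$, but $\hat P_{ij}(1,1)=\tfrac14+\tfrac34\cdot\tfrac19=\tfrac13\neq\tfrac14$, so the left-hand side is strictly positive. Before finalizing, I would check which assumption Chen et al.\ impose and align the statement with it.
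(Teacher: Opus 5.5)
Your proposal is correct, and there is no in-paper argument to compare it against: the paper gives no proof of this statement and only cites Chen et al. Your diagnosis is right, and as printed the claim is false.

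\textbf{The counterexample.} It checks out. $P_{ij}$ is uniform on $\{0,1\}^2$, whereas $\hat P_{ij}$ assigns $\tfrac{1}{3},\tfrac{1}{6},\tfrac{1}{6},\tfrac{1}{3}$ to $(0,0),(0,1),(1,0),(1,1)$. Hence $\mathrm{KL}(P_{ij}\,\|\,\hat P_{ij})=\tfrac{1}{2}\log\tfrac{9}{8}>0=\mathrm{KL}(P_{ij}\,\|\,P_iP_j)$. The conditionals there are exact, so the failure has nothing to do with the noise the paper mentions right after the statement.

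\textbf{Steps 1 and 2.} These are right. The support of $Q$ contains that of the true joint, so the lifted divergence is finite and equals $I(A_i;A_j\mid\mathbf A)$. These steps pinpoint what any unconditional proof would need: $I(A_i;A_j\mid\mathbf A)\le I(A_i;A_j)$, i.e.\ that conditioning never increases mutual information. That is exactly what your example violates.

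\textbf{The repair.} Your chain-rule computation under $A_j - A_i - \mathbf A$ is correct, but that is not the structure the paper has in mind. Its redundancy example (once $(A,B)$ and $(B,C)$ are selected, $(A,C)$ is implied) is the case $A_i\perp A_j\mid\mathbf A$. There $\hat P_{ij}=P_{ij}$ and the left-hand side vanishes.

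Also, nonnegative interaction information is not equivalent to your Markov chain, as your wording suggests. It is strictly weaker, covers both hypotheses, and is exactly the condition your route needs.

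It is still only sufficient for the KL inequality, because Step 2 is lossy. Take fair bits with $\Pr[A_i=A_j]=p\in(0,1)$ and $\mathbf A=A_i\oplus A_j$. Both conditionals given $\mathbf A$ are uniform, so $\hat P_{ij}=P_iP_j$ and the two sides of the statement are equal. Yet $I(A_i;A_j\mid\mathbf A)=\log 2>I(A_i;A_j)$.

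So state the result under nonnegative interaction information (or conditional independence given $\mathbf A$), and use your example to show the hypothesis cannot be dropped. Whether Chen et al.\ assume such a hypothesis cannot be determined from this paper, so checking the source, as you plan, is the right next step.
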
}

\revision{Note that in HeteroFedSyn, both $\Pr[A_i \mid A_1,\ldots,A_k]$ and $\Pr[A_j \mid A_1,\ldots,A_k]$ are estimated from noisy marginals.
As a result, although \autoref{the:adapt-measure} still holds, its advantage may be attenuated by the noise.}
Based on this, we also propose an adaptive marginal selection algorithm, with the detailed process shown in \autoref{alg:marginal_select}.
Following the greedy strategy from PrivSyn, we denote the set of all $d(d-1)/2$ number of $2$ way marginals as $Z$.
For each $2$-way marginal $z$ in the set, we represent its $\text{InDif2}$ as $\phi_z$ and the noise error introduced by its release as $\psi_z$. 
In the initial step, none of the $2$-way marginals have been selected, and all errors come from $\text{InDif2}$, denoted as $\sum_{z\in Z}\phi_z$.
In each iteration, we traverse all $2$-way marginals and select the one that can most reduce the current total error $E^t$.
Intuitively, the selection process involves continuously evaluating whether the noise error introduced by releasing each $2$-way marginal is smaller than the error caused by ignoring its dependency relationship.
The key point in our adaptive marginal selection is that we update $\text{InDif2}$ of the related $2$-way marginals accordingly after adding a $2$-way marginal in each iteration.  

We show how to update $\text{InDif2}$ scores in \autoref{alg:update_score}.
In the $t$-th iteration, the server requests the newly selected $2$-way marginal from the user side (only one new $2$-way marginal is added per round), and then it executes the data synthesis algorithm based on the currently selected $2$-way marginals to obtain $\hat D_t$.
From this synthetic dataset, we can obtain all the current $2$-way marginals.
The accuracy of these $2$-way marginals can reflect the quality of the selected marginals.
\revision{We use $\text{InDif2}$ score computed from the current $2$-way marginal as the updated $\text{InDif2}$.
This step is reflected in Line 3 of the \autoref{alg:update_score}.}
If a $2$-way marginal shares an attribute with the newly added marginal, it is likely to be affected in the previous examples.
In this case, we update its $\text{InDif2}$ score.
Moreover, to avoid inaccurate marginal selection and generation from harming the accuracy of $\text{InDif2}$, we update it only when the updated $\text{InDif2}$ is smaller than the original $\text{InDif2}$.
\revision{This step is reflected in Line 5-9 of the \autoref{alg:update_score}.}

Note that, before conducting data synthesis, it is necessary to handle the attributes uncovered by the selected $2$-way marginals.
We refer to these attributes as isolated attributes.
For these attributes, we directly release their $1$-way marginals to provide information for data generation.
In this paper, we treat the data synthesis process as a black box.
We can directly invoke PrivSyn's GUM algorithm~\cite{DBLP:conf/uss/00010LH0H0021}, which iteratively replicates and replaces data points to ensure that a randomly generated dataset aligns with all the released marginals.
We can also employ other data synthesis algorithms, such as Genetic Algorithm~\cite{DBLP:journals/corr/abs-1712-06567}, Relaxed Projection~\cite{DBLP:conf/icml/AydoreBKKM0S21}, Generative Network~\cite{DBLP:conf/nips/LiuVW21}, and PGM~\cite{DBLP:conf/icml/McKennaSM19}.
The choice of synthesis algorithm does not interfere with the performance of the design in this paper.

\begin{algorithm}[t]
\small
	\caption{Privacy Budget Allocation}
	\label{alg:budget_allocation}
	\begin{algorithmic}[1]
	    \REQUIRE {Local Dataset $\{D_i\}_{i\in c}$, total privacy budget $\rho$, \revision{ratio $q$ with $q<1/3$.}}
        \ENSURE {Synthetic dataset $\hat D$}
        \STATE {Share all $1$-way marginals of $\{D_i\}_{i\in c}$ with $\rho_1=q \cdot\rho$.}
        \STATE {Share all $2$-way marginals of $\{D_i\}_{i\in c}$ with $\rho_2=q \cdot\rho$.}
        \STATE {Dependency scores measurement with no privacy loss.}
        \STATE {Adaptive Marginal Selection with no privacy loss.}
        \STATE {Update InDif2 scores with no privacy loss.}
        \STATE {Share selected $2$-way marginals of $\{D_i\}_{i\in c}$ with $\rho_3=(1-2q)\rho$.}
        \STATE {Handle isolated marginals with no privacy loss.}
        \STATE {Construct $\hat D$ using data synthesis algorithm with no privacy loss.}
	\end{algorithmic}
\end{algorithm}

\subsection{Privacy Budget Allocation}
\label{subsec: budget_alloc}
\revision{Privacy budget allocation serves as a fundamental building block that underpins the entire execution of HeteroFedSyn.
As multiple rounds of communication take place between the participants and the server, the privacy budget must be carefully allocated across different stages to ensure a strict end-to-end privacy guarantee.
We present the detailed privacy budget allocation in \autoref{alg:budget_allocation}.}

In this paper, the privacy boundary is at the participants' side.
Thus, all privacy budget consumption occurs during the data-sharing process on the user side.
All computations on the server do not introduce additional privacy loss due to the post-processing property of DP (\autoref{def:post-process}).
In the initial stage, when sharing all $1$-way and $2$-way marginals for dependency measurement, users allocate $q$ of the total privacy budget, respectively.
After selecting the important $2$-way marginals on the server side, users allocate the remaining $1-2q$ of the privacy budget to share these selected marginals.
\revision{To ensure that the majority of the privacy budget is reserved for the selected $2$-way marginals, we require $q < 1/3$, so that $1 - 2q$ dominates the total budget allocation.}
The privacy budget allocation strategy is empirical. 
Specifically, we can follow the allocation strategy in PrivSyn and set $q=10\%$.
Considering the high noise sensitivity in a distributed setting, $q$ can also be moderately increased to ensure the accuracy of the selected marginals.
We will investigate the impact of privacy budget allocation in the experiment section.

For attributes uncovered by selected $2$-way marginals, we directly use $1$-way marginals shared in the initial stage. 
Besides, each selected 2-way marginal in \autoref{alg:marginal_select} must be immediately shared with the server to update the synthetic dataset. However, since the total number of selected marginals is unknown in advance, the privacy budget cannot be pre-allocated for each sharing step.
To address this, we assume that no more than one-third of all possible 2-way marginals will be selected. This ratio serves as a hyperparameter. Based on this assumption, each time a 2-way marginal is shared, it consumes $d(d-1)q\rho/6$ of the total privacy budget.

\section{Complexity Analysis}
\revision{In this section, we analyze the computation and communication complexity of HeteroFedSyn.}

\revision{\noindent\textbf{Computation Complexity.}
Under HeteroFedSyn, the client-side computation is lightweight.
Each client computes all $1$-way and $2$-way marginals over its local dataset with complexities $O(n_i d)$ and $O(n_i d^2)$, where $n_i$ denotes the local dataset size.
Adding noise to a $1$-way marginal costs $O(s_a)$, while perturbing a $2$-way marginal $M_{a,b}$ costs $O(s_a s_b)$.
Since $2$-way marginals are perturbed twice (initial sharing and after server-side selection), the total noise cost is $O(2 s_a s_b)$.
Assuming a uniform average domain size $s$, the overall client-side complexity is
$O\!\left(d(n_i+s)+d^2(n_i+2s^2)\right)$.}

\revision{On the server side, aggregating all received marginals incurs a cost of $O(nd+nd^2)$, where $n$ is the total number of records from all clients.
Computing $\text{InDif2}$ scores for all attribute pairs requires $O(d^4)$ time.
The subsequent greedy marginal selection takes $O(d^2)$ in the worst case.
Finally, the GUM-based data synthesis algorithm has complexity $O(T d^2 n_{\text{syn}})$~\cite{chen2025benchmarking}, where $T$ is the number of iterations and $n_{\text{syn}}$ is the synthetic dataset size.
Thus, the overall server-side complexity is
$O(d^4 + d^2(n + T n_{\text{syn}}) + nd)$,
dominated by $O(d^4)$ when $d > \sqrt{n + T n_{\text{syn}}}$.
For AdaFedPrivSyn with dynamic marginal selection, the worst-case additional cost is $O(d^6 + d^4 T n_{\text{syn}})$.
In practice, by fixing the number of data synthesis reruns to a constant, the overall complexity remains dominated by $O(d^4)$.
In our experiments, we update InDif2 scores for every 10 newly selected marginals, which substantially reduces the computational cost.}

\revision{\noindent\textbf{Communication Complexity.}
Each client participates in only two communication rounds: initialization and post-selection.
Although adaptive marginal selection may involve multiple transmissions, the total communication cost remains the same as in the non-adaptive setting.
In the worst case, when all $2$-way marginals are selected, the client-to-server communication complexity is $O(2 d^2 k^2 + d k)$, where $k \ll s^2$ is the compressed marginal length.
In practice, the number of selected marginals is typically much smaller than $d^2$, depending on inter-attribute correlations.}

\revision{On the server side, it only needs to notify clients of selected $2$-way marginals by transmitting attribute index pairs.
Thus, the server-to-client communication cost is $O(d^2)$ in the worst case.}

\section{Evaluation}
\revision{In this section, we evaluate the performance of the HeteroFedSyn framework, which comprises two algorithms: FedPrivSyn and AdaFedPrivSyn. Our evaluation focuses on three aspects:
(1) the similarity between the synthetic datasets generated by HeteroFedSyn and the original datasets, as well as their utility for various downstream tasks;
(2) a comparison between HeteroFedSyn and existing approaches. Since no prior work exists under the same setting, we primarily compare against PrivSyn~\cite{DBLP:conf/uss/00010LH0H0021} from the centralized setting, along with two na\"ive distributed baselines; and
(3) the impact of varying the number of users, data distribution strategies, and key parameters on the performance of HeteroFedSyn.}

\subsection{Setup}
\noindent\textbf{Datasets.}
We evaluate our method using five real-world datasets.
All datasets are multi-dimensional, containing both numerical and categorical attributes.
We summarize all datasets in \autoref{tab:datasets}.

\begin{table}[h]
\footnotesize
\centering
\setlength{\abovecaptionskip}{1ex} 
 \setlength{\belowcaptionskip}{-1ex}
\caption{Datasets used for evaluation.}
\label{tab:datasets}
\begin{tabular}{l|cccc}
    \toprule
    \textbf{Name} & \textbf{\#Records} & \textbf{\#Attr} & \textbf{\#Num} & \textbf{\#Cat} \\ 
    \midrule
    Adult~\cite{adult} & 32,562  & 15 & 6 & 9 \\ 
    Abalone~\cite{abalone} & 4,177 & 9 & 8 & 1 \\
    Obesity~\cite{obesity} & 2,112& 17& 8& 9 \\
    Insurance~\cite{insurance} & 1,339 & 7 & 3 & 4 \\
    Shoppers~\cite{shoppers} & 12,331 & 18 & 10 & 8 \\
    \bottomrule
\end{tabular}
\end{table}

Prior to the experiments, all datasets are preprocessed to enable efficient computation of marginals for each attribute.
Specifically, we directly encode all categorical attributes without altering their domain sizes.
For numerical attributes, we divide their ranges into $100$ bins and mapped all data accordingly.
\begin{figure*}[t]
    \centering
    \includegraphics[width=0.7\linewidth]{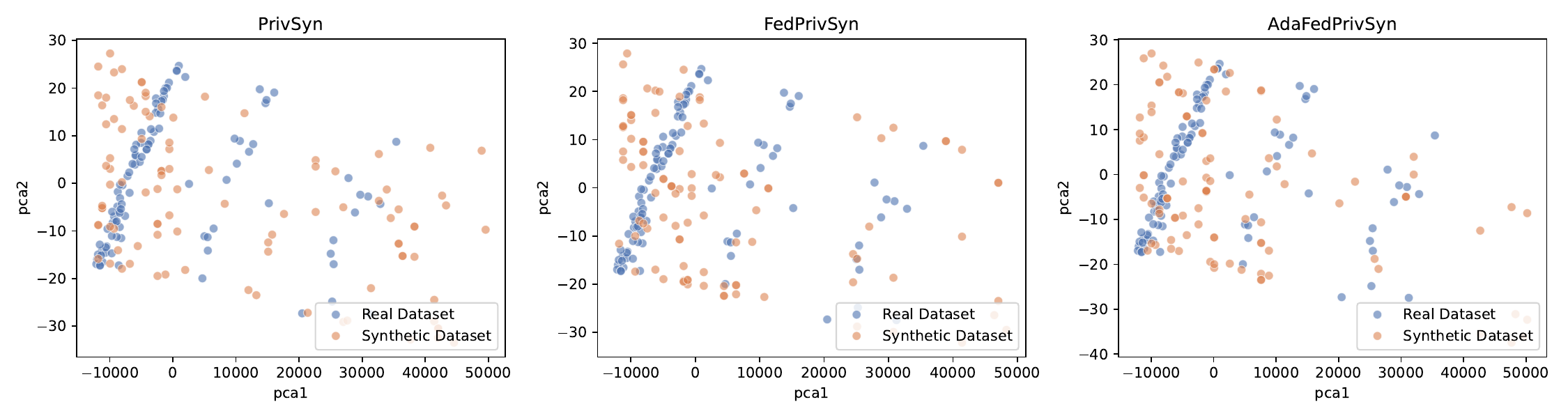}
    \vspace{-1ex}
    \caption{\revision{Comparison of the original dataset and synthetic datasets generated by PrivSyn, FedPrivSyn, and AdaFedPrivSyn in a two-dimensional PCA space. The results are evaluated on Insurance dataset, where $\epsilon=5$, number of participants is $c=2$, and $40\%$ privacy budget is allocated for releasing selected $2$-way marginals.}}
    \label{fig:pca} 
    \vspace{-1em}
\end{figure*}

\begin{figure*}[t]
    \centering
    \includegraphics[width=0.7\linewidth]{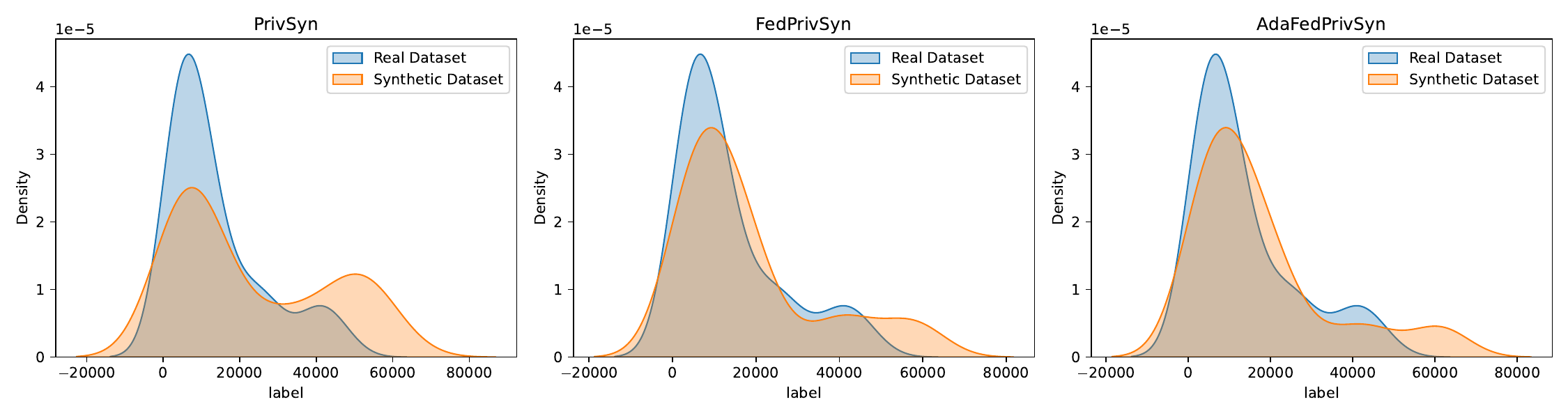}
    \vspace{-1ex}
    \caption{\revision{Comparison of the single-attribute distribution between the synthetic datasets generated by PrivSyn, FedPrivSyn, and AdaFedPrivSyn and the original dataset. The results are evaluated on the \emph{Label} attribute of the Insurance Dataset, where $\epsilon=5$, number of participants is $c=2$, and $40\%$ privacy budget is allocated for releasing selected $2$-way marginals.}}
    \label{fig:attribute_dist}
    \vspace{-1em}
\end{figure*}
\noindent\textbf{Downstream Tasks and Metrics.}
Du et al.~\cite{du2024systematic} propose a comprehensive and systematic evaluation framework for data synthesis algorithms.
Our evaluation follows their framework and specifically includes the following aspects.
\begin{itemize}[leftmargin=*]
    \item \emph{Utility in Range Query Tasks.} For each evaluation, we generate $1,000$ random queries.
    For discrete attributes, query key sets are sampled at random, while for continuous attributes, query ranges are defined by randomly generated lower and upper bounds.
    Each query returns the proportion of records within the specified range relative to the entire dataset.
    The query error is measured as the average difference between the results obtained on the synthetic dataset and those on the original dataset for all queries.
    Formally, let $\mathcal{A}$ denote the synthesis algorithm being evaluated, $D_{syn}$ the synthetic dataset generated by $\mathcal{A}$, $D_{org}$ the original dataset, and $R$ the query set.
    The query error is then indicated as 
    \begin{equation}
        Err_{rq}=\frac{1}{R}\sum_{r \in R} \left| q_r(D_{syn}) - q_r(D_{org}) \right|.
    \end{equation}
    \item \emph{Wasserstein-based Fidelity.}
    The Wasserstein distance quantifies the minimum cost required to transform one distribution into another.
    In our evaluation, we compute the Wasserstein distance between the $2$-way marginals of the synthetic dataset and those of the original dataset. Formally, let $P$ denote the $2$-way marginal distribution derived from the synthetic dataset $D_{syn}$, and $Q$ the corresponding distribution from the original dataset $D_{org}$.
    The fidelity of a synthesis algorithm $A$ is then defined as follows:
    \begin{equation}
        Fidelity(A):=\mathbb{E}_{P\sim D_{syn}, Q\sim D_{org}}[W(P, Q)].
    \end{equation}
    For the details of Wasserstein distance $W(P, Q)$, we refer the reader to~\cite{du2024systematic}.
    \item \emph{Accuracy in Machine Learning Tasks.}
    We apply the synthetic datasets to train three machine learning models: Random Forest, Multilayer Perceptron (MLP), and XGBoost~\cite{DBLP:conf/kdd/ChenG16}.
    For dataset with categorical labels, we train classification models; for datasets with numerical labels, we train regression models.
    The entire synthetic dataset is used for model training, while $20\%$ of original dataset is reserved for validating model accuracy.
    Finally, we report the average performance of the synthetic dataset across three models, using F1 score as the evaluation metric for classification tasks and Root Mean Square Error (RMSE) for regression tasks.
\end{itemize}

All synthetic datasets are generated with the same size as the corresponding original datasets.
Given the high computational cost of the entire process, we repeat the data synthesis and evaluation process five times to reduce randomness.

\noindent\textbf{Experimental Setting.}
The development and evaluation of our experiments are conducted on top of an open-source framework~\cite{synmeter}.
All the algorithms are implemented in Python 3.10 and all the experiments are conducted on a server running Ubuntu 22.04.1, equipped with E5-2620 v4 2.10GHz processors and 128GB of memory. 

\begin{table*}[!h]
\footnotesize
    \centering
    \setlength{\tabcolsep}{4.5pt}
    \setlength{\abovecaptionskip}{1ex} 
    \setlength{\belowcaptionskip}{0ex}
    \caption{\revision{Overall performance of the synthetic datasets generated by different methods across three different downstream tasks on multiple datasets. We evaluate under three levels of privacy protection strength. The arrows next to $\epsilon$ indicators denote whether a higher or lower result is better. For the distributed methods, the number of participants is $c=5$, the projected dimension $k=10$, and $80\%$ privacy budget is allocated for releasing selected $2$-way marginals.}}
    \resizebox{0.99\textwidth}{!}{
    \begin{tabular}{l|ccc|ccc|ccc|ccc|ccc}
    \toprule 
        \textbf{Dataset} & \multicolumn{3}{c|}{\textbf{Adult}} & \multicolumn{3}{c|}{\textbf{Abalone}} & \multicolumn{3}{c|}{\textbf{Obesity}} & \multicolumn{3}{c|}{\textbf{Insurance}} & \multicolumn{3}{c}{\textbf{Shoppers}}\\ 
        \cmidrule{1-16}
        \textbf{Query Error} & $\varepsilon = 0.2\downarrow$ & $\varepsilon = 1\downarrow$ & $\varepsilon = 5\downarrow$ & $\varepsilon = 0.2\downarrow$ & $\varepsilon = 1\downarrow$ & $\varepsilon = 5\downarrow$ & $\varepsilon = 0.2\downarrow$ & $\varepsilon = 1\downarrow$ & $\varepsilon = 5\downarrow$ & $\varepsilon = 0.2\downarrow$ & $\varepsilon = 1\downarrow$ & $\varepsilon = 5\downarrow$ & $\varepsilon = 0.2\downarrow$ & $\varepsilon = 1\downarrow$ & $\varepsilon = 5\downarrow$ \\ 
        \midrule
        PrivSyn & $0.011$ & $0.004$ & $0.003$ & $0.058$ & $0.057$ & $0.046$ & $0.085$ & $0.052$ & $0.027$ & $0.059$ & $0.044$ & $0.026$ & $0.039$ & $0.019$ & $0.007$ \\
        FedPrivSyn-allMarg & $0.035$ & $0.031$ &$0.016$ & $0.061$ & $0.058$ & $0.056$ & $0.099$ & $0.093$ & $0.076$ & $0.057$ & $0.053$ & $0.046$ & $0.054$ & $0.051$ & $0.037$ \\
        FedPrivSyn-RandMarg & $0.043$ & $0.035$ & $0.022$ & $0.057$ & $0.060$ & $0.059$ & $0.095$ & $0.096$ & $0.073$ & $0.061$ & $0.060$ & $0.048$ & $0.054$ & $0.050$ & $0.037$\\
        \hline
        FedPrivSyn-w/o RP & $0.035$ & $0.018$ & $0.007$ & $0.060$ & $0.059$ & $0.056$ & $0.105$ & $0.077$ & $0.051$ & $\textbf{0.056}$ & $\textbf{0.053}$ & $0.031$ & $0.052$ & $0.041$ & $0.018$\\
        FedPrivSyn & $0.018$ & $0.018$ & $\textbf{0.005}$ & $0.058$ & $\textbf{0.056}$ & $0.054$ & $0.093$ & $0.083$ & $0.049$ & $0.083$ & $0.055$ & $0.038$ & $0.051$ & $0.039$ & $0.016$\\
        AdaFedPrivSyn & $\textbf{0.017}$ & $\textbf{0.009}$ & $0.006$ & $\textbf{0.057}$ & $0.058$ & $\textbf{0.044}$ & $\textbf{0.091}$ & $\textbf{0.076}$ & $\textbf{0.030}$ & $0.094$ & $0.062$ & $\textbf{0.028}$ & $\textbf{0.048}$ & $\textbf{0.020}$ & $\textbf{0.007}$\\
        \cmidrule{1-16}
        \textbf{Fidelity Error} & $\varepsilon = 0.2\downarrow$ & $\varepsilon = 1\downarrow$ & $\varepsilon = 5\downarrow$ & $\varepsilon = 0.2\downarrow$ & $\varepsilon = 1\downarrow$ & $\varepsilon = 5\downarrow$ & $\varepsilon = 0.2\downarrow$ & $\varepsilon = 1\downarrow$ & $\varepsilon = 5\downarrow$ & $\varepsilon = 0.2\downarrow$ & $\varepsilon = 1\downarrow$ & $\varepsilon = 5\downarrow$ & $\varepsilon = 0.2\downarrow$ & $\varepsilon = 1\downarrow$ & $\varepsilon = 5\downarrow$ \\ 
        \midrule
        PrivSyn & $0.293$ & $0.086$ & $0.036$ & $0.451$ & $0.430$ & $0.360$ & $0.337$ & $0.263$ & $0.148$ & $0.475$ & $0.361$ & $0.194$ & $0.622$ & $0.317$ & $0.116$ \\
        FedPrivSyn-allMarg & $0.589$ & $0.476$ & $0.230$ & $0.462$ & $0.462$ & $0.449$ & $0.352$ & $0.352$ & $0.330$ & $0.359$ & $0.327$ & $0.275$ & $0.802$ & $0.757$ & $0.575$\\
        FedPrivSyn-RandMarg & $0.587$ & $0.463$ & $0.326$ & $0.461$ & $0.460$ & $0.457$ & $0.340$ & $0.342$ & $0.305$ & $0.345$ & $0.352$ & $0.316$ & $0.795$ & $0.758$ & $0.584$\\
        \hline
        FedPrivSyn-w/o RP & $0.542$ & $0.280$ & $0.112$ & $0.453$ & $\textbf{0.451}$ & $0.431$ & $0.523$ & $\textbf{0.290}$ & $0.222$ & $\textbf{0.279}$ & $\textbf{0.312}$ & $0.240$ & $0.775$ & $0.635$ & $0.318$\\
        FedPrivSyn & $0.230$ & $0.281$ & $0.061$ & $0.454$ & $0.454$ & $0.425$ & $0.345$ & $0.299$ & $0.220$ & $0.667$ & $0.385$ & $0.267$ & $0.757$ & $0.602$ & $0.300$ \\
        AdaFedPrivSyn & $\textbf{0.158}$ & $\textbf{0.097}$ & $\textbf{0.034}$ & $\textbf{0.411}$ & $0.458$ & $\textbf{0.342}$ & $0.542$ & $0.391$ & $\textbf{0.137}$ & $0.587$ & $0.525$ & $\textbf{0.158}$ & $\textbf{0.39}4$ & $\textbf{0.295}$ &  $\textbf{0.104}$\\
        \cmidrule{1-16}
        \textbf{ML Efficiency} & $\varepsilon = 0.2\uparrow$ & $\varepsilon = 1\uparrow$ & $\varepsilon = 5\uparrow$ & $\varepsilon = 0.2\downarrow$ & $\varepsilon = 1\downarrow$ & $\varepsilon = 5\downarrow$ & $\varepsilon = 0.2\uparrow$ & $\varepsilon = 1\uparrow$ & $\varepsilon = 5\uparrow$ & $\varepsilon = 0.2\downarrow$ & $\varepsilon = 1\downarrow$ & $\varepsilon = 5\downarrow$ & $\varepsilon = 0.2\uparrow$ & $\varepsilon = 1\uparrow$ & $\varepsilon = 5\uparrow$ \\ 
        \midrule
        PrivSyn & $0.652$ & $0.743$ & $0.752$ & $3.157$ & $3.065$ & $3.313$ & $0.103$ & $0.135$ & $0.188$ & $149.746$ & $143.610$ & $131.762$ & $0.775$ & $0.726$ & $0.724$ \\
        FedPrivSyn-allMarg & $0.687$ & $0.687$ & $0.763$ & $3.326$ & $3.326$ & $3.355$ & $0.107$ & $0.110$ & $0.131$ & $132.409$ & $130.797$ & $130.222$ & $0.524$ & $0.773$ & $0.769$\\
        FedPrivSyn-RandMarg & $0.620$ & $0.651$ & $0.685$ & $3.225$ & $2.667$ & $3.375$ & $0.108$ & $0.152$ & $0.136$ & $144.673$ & $144.531$ & $142.742$ & $0.655$ & $0.778$ & $0.766$\\
        \hline
        FedPrivSyn-w/o RP & $0.666$ & $0.668$ & $0.681$ & $3.287$ & $3.097$ & $3.418$ & $0.029$ & $0.117$ & $\textbf{0.145}$ & $142.630$ & $137.846$ & $131.009$ & $0.743$ & $0.772$ & $0.741$\\
        FedPrivSyn & $0.648$ & $\textbf{0.736}$ & $0.752$ & $\textbf{3.052}$ & $2.714$ & $3.192$ & $\textbf{0.118}$ & $0.131$ & $0.134$ & $150.798$ & $136.469$ & $130.937$ & $\textbf{0.749}$ & $0.775$ & $0.712$\\
        AdaFedPrivSyn & $0.609$ & $0.718$ & $0.728$ & $3.247$ & $3.211$ & $\textbf{3.006}$ & $0.029$ & $0.113$ & $0.128$ & $151.314$ & $144.364$ & $137.539$ & $0.706$ & $0.707$ & $\textbf{0.782}$\\
    \bottomrule
    \end{tabular}
    }
    \label{tab:entire_result}
    \vspace{-3ex}
\end{table*}

\subsection{Analysis of Experimental Results}
\noindent\textbf{Performance of Synthesis Algorithms.}
In this part, we aim to demonstrate the utility of the synthetic datasets generated by the proposed synthesis algorithms from different perspectives, across various tasks and datasets.

\emph{(1) Distributions comparison of the synthetic and original datasets.}
\autoref{fig:pca} illustrates the comparison of distributional similarity between the original sensitive dataset and the datasets generated by the centralized PrivSyn, as well as our proposed methods, FedPrivSyn and AdaFedPrivSyn.
For ease of visualization, we use the relatively small Insurance dataset.
All attributes are projected into a two-dimensional PCA space, with blue points representing the original dataset and orange points representing the synthetic datasets.
We set the overall privacy budget to $\epsilon=5$.
For FedPrivSyn and AdaFedPrivSyn, $60\%$ of the total privacy budget is allocated to share the $1$-way and $2$-way marginals, while the remaining $40\%$ of the budget is used to release the selected $2$-way marginals.

From a visual perspective, we can observe that, similar to PrivSyn, although the synthetic data points do not exactly overlap with the original points, their overall distribution is roughly similar to that of the original dataset.
The DP noise inevitably increases the variance of the synthetic datasets, and the distributed noise addition further introduces noise that is participant number times larger than in the centralized setting.
Nevertheless, \autoref{fig:pca} shows that the datasets generated by FedPrivSyn and AdaFedPrivSyn are overall comparable to those produced by PrivSyn.
The \emph{Label} attribute in the Insurance dataset is numerical, and we are interested in whether the synthetic datasets preserve the distributed characteristics of this attribute.
This is important for downstream tasks such as training regression models.
\autoref{fig:attribute_dist} compares the \emph{Label} distributions of datasets generated by PrivSyn, FedPrivSyn, and AdaFedPrivSyn with the original dataset.
Although the variance of the distribution increases, the synthetic datasets still remain the main characteristics of the original distribution.
Notably, AdaFedPrivSyn preserves the peak information of the distribution more prominently.

\revision{\emph{(2) Performance comparison of HeteroFedSyn with baseline methods.}
\autoref{tab:entire_result} compares synthesis algorithms across downstream tasks, datasets, and privacy levels ($\epsilon=0.2$, $1$, and $5$). 
We compare HeteroFedSyn with three baselines. PrivSyn~\cite{DBLP:conf/uss/00010LH0H0021} represents a classic method under the centralized setting. FedPrivSyn-allMarg adds noise to all 2-way marginals without marginal selection and directly uses them for data synthesis. FedPrivSyn-RandMarg randomly selects a random number of 2-way marginals, perturbs them, and uses the resulting marginals for synthesis. 
We also evaluate FedPrivSyn-w/o RP, which removes random projection for marginal compression. We do not treat it as a baseline, since random projection is mainly introduced to reduce communication overhead rather than to improve synthesis accuracy, and is therefore considered part of our method.
For fair comparison, FedPrivSyn and AdaFedPrivSyn follow the same privacy budget allocation as PrivSyn, with $20\%$ for marginal selection and $80\%$ for perturbing 2-way marginals. We set the number of participants to $c=5$ and the random projection dimension to $k=10$.}

\revision{For range queries, we report the error of 2-way marginals, where lower values indicate better performance. Fidelity is measured using the Wasserstein distance between the real and synthetic datasets. We further evaluate utility on three downstream machine learning tasks: Random Forest, MLP, and XGBoost, by training models on synthetic data and validating them on a held-out subset of the original data. For the Abalone and Insurance datasets, we use regression models and report prediction error (lower is better); for the remaining datasets, we use classification models evaluated by F1 score (higher is better).
Overall, despite injecting substantially more noise due to distributed privacy constraints, the distributed methods achieve errors within the same order of magnitude as PrivSyn across all datasets and metrics. Among all methods, AdaFedPrivSyn consistently achieves the best performance on both range query and fidelity on Adult and Shoppers containing a large number of attributes.
In contrast, on datasets with fewer attributes and smaller attribute domains, such as Insurance, our methods do not exhibit a clear advantage. This is expected, since when the number of 2-way marginals is limited and their domain sizes are small, redundancy among marginals is low. In such cases, allocating part of the privacy budget to marginal selection and dimensionality reduction provides limited benefit.
We also observe that all methods exhibit similar trends on range query and fidelity metrics, while the advantages of distributed methods on downstream ML tasks are less consistent. This is likely because ML performance depends not only on marginal accuracy but also on higher-order correlations and task-specific inductive biases, which may not be fully captured by improved 2-way marginals alone. As a result, gains in query accuracy and fidelity do not always translate directly into uniform improvements in downstream learning performance.}

\noindent\textbf{Impact of Key Parameters on the Synthesis Algorithms.}
When applying algorithms in real-world scenarios, it is inevitable to configure some key parameters.
In this part, we investigate how these parameters affect the performance of the proposed algorithms.

\begin{figure}[t]
    \centering
    \includegraphics[width=0.85\linewidth]{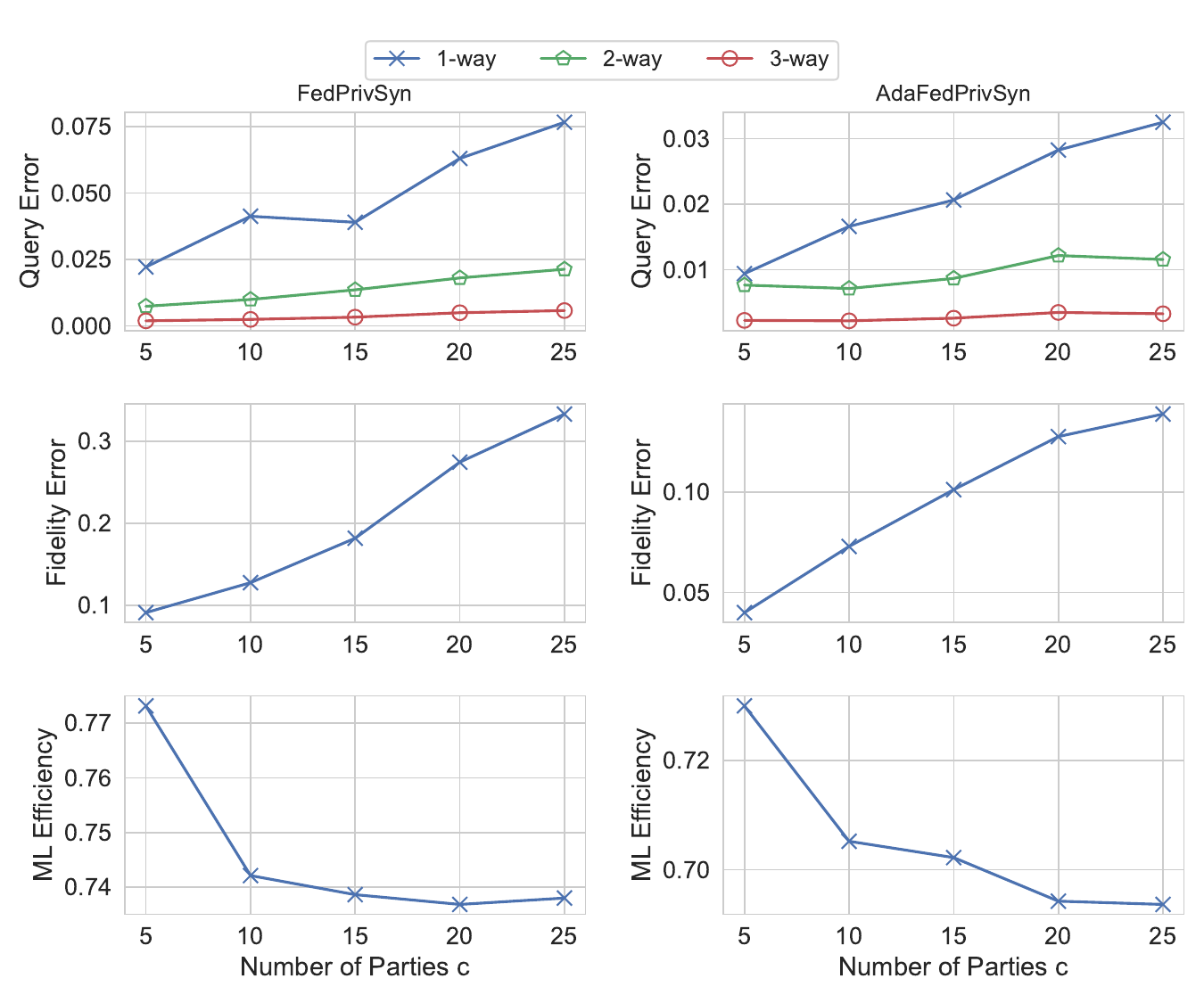}
    \vspace{-1ex}
    \caption{\revision{Impact of number of participants $c$ on FedPrivSyn and AdaFedPrivSyn across three downstream tasks on Adult dataset, where $\epsilon=5$, and $40\%$ privacy budget is allocated for releasing selected $2$-way marginals.}}
    \label{fig:vary_c_utility}
    \vspace{-1em}
\end{figure}

\emph{(1) Impact of the number of participants $c$.}
On the Adult dataset, we vary the number of participants $c$ from $5$ to $25$, assuming that data samples are uniformly distributed across all participants. 
By evaluating the range query error, fidelity error, and ML efficiency of the generated datasets, we demonstrate the impact of increasing the number of distributed participants on FedPrivSyn and AdaFedPrivSyn.
The results are shown in \autoref{fig:vary_c_utility}.
As the number of distributed participants increases, the dataset is spread across more client-sides, which means that the aggregated dataset inevitably contains more noise. 
Consequently, the performance of the synthetic datasets on the three tasks decreases as $c$ increases.
However, we observe that for range query error on 2-way and 3-way marginals, as well as for the fidelity error, the increase in errors of the synthetic datasets slows down as $c$ grows. The error growth rates of AdaFedPrivSyn consistently lower than those of FedPrivSyn.

\begin{figure}[t]
    \centering
    \includegraphics[width=0.9\linewidth]{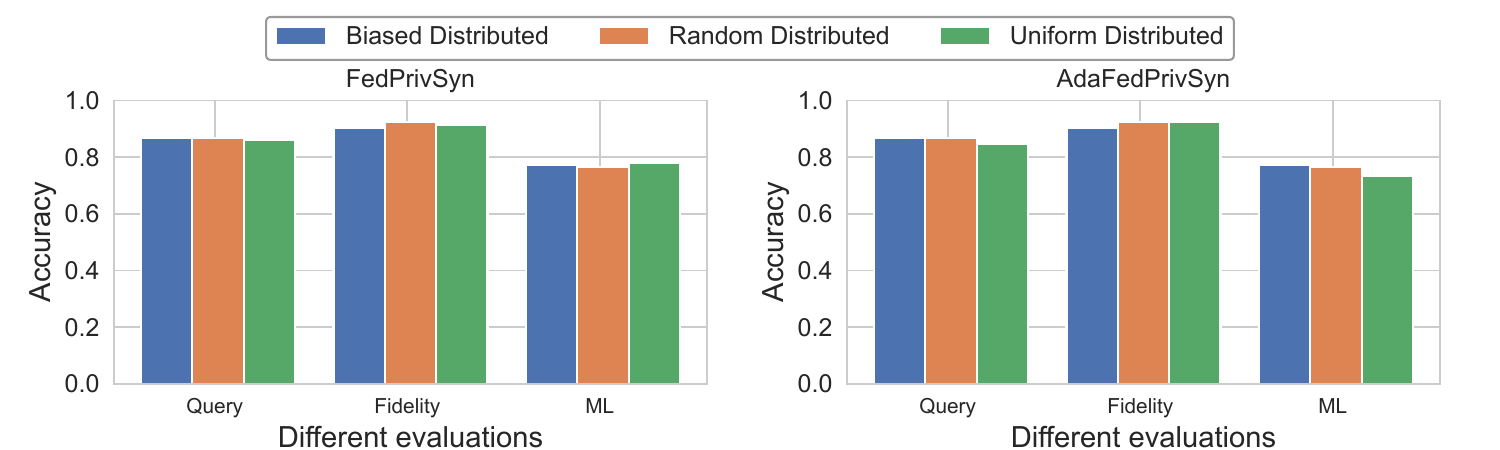}
    \vspace{-1ex}
    \caption{\revision{Impact of data distribution on FedPrivSyn and AdaFedPrivSyn across three downstream tasks on Adult dataset, where $\epsilon=5$, $c=5$, and $40\%$ privacy budget is allocated for releasing selected $2$-way marginals.}}
    \label{fig:distribution_imapct}
\end{figure}

\begin{figure*}[t]
    \centering
    \includegraphics[width=0.7\linewidth]{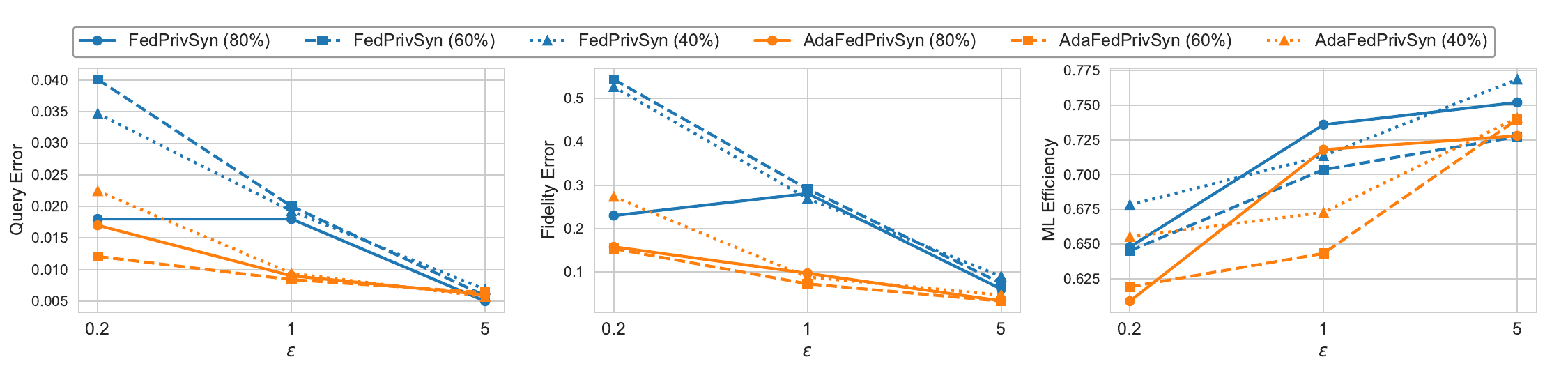}
    \vspace{-1ex}
    \caption{\revision{Impact of privacy budget allocation on FedPrivSyn and AdaFedPrivSyn across three downstream tasks on Adult dataset, with $40\%$, $60\%$, or $80\%$ of the privacy budget assigned to selected 2-way marginals and $c=5$.}}
    \label{fig:epsilon_dist}
    \vspace{-1em}
\end{figure*}

\begin{figure*}[t]
    \centering
    \includegraphics[width=0.7\linewidth]{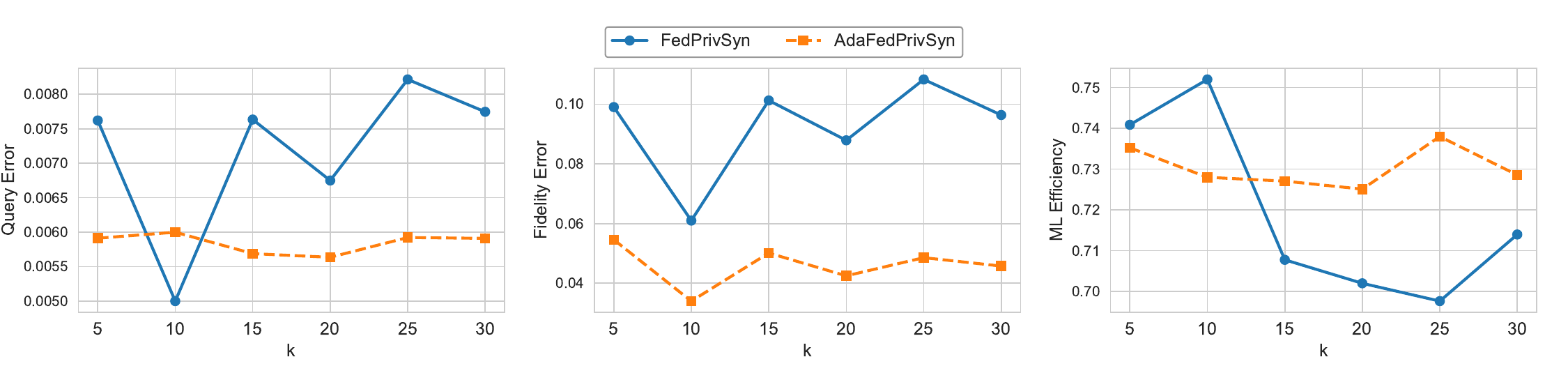}
    \vspace{-1ex}
    \caption{\revision{Impact of random projection dimension $k$ on FedPrivSyn and AdaFedPrivSyn across three downstream tasks on Adult dataset, where $\epsilon=5$, $c=5$, and $40\%$ privacy budget is allocated for releasing selected $2$-way marginals.}}
    \label{fig:k_vary}
    \vspace{-1em}
\end{figure*}

\revision{
\emph{(2) Impact of data distribution among participants.}
In our evaluation, we assume that the dataset is evenly partitioned among participants. For example, a dataset of size 1,000 distributed across five users assigns 200 samples to each. In practice, data may be unevenly distributed, with some participants holding substantially more samples or exhibiting significant distributional bias.
\autoref{fig:distribution_imapct} compares FedPrivSyn and AdaFedPrivSyn under three settings: (i) uniform data distribution across participants, (ii) random data allocation with unequal sample sizes, and (iii) distributions with pronounced bias, where participants hold data from different subpopulations (one participant may contain only high-income/low-income individuals). Since the evaluation metrics operate on different scales, we normalize them for joint visualization. The F1 score (ML Efficiency) is left unchanged, while query error and fidelity error are transformed using $e^{-V/\alpha}$, with $\alpha=0.05$ for range queries and $\alpha=1$ for fidelity. After normalization, larger values consistently indicate better performance.
As shown in \autoref{fig:distribution_imapct}, AdaFedPrivSyn and FedPrivSyn exhibit comparable performance across all three settings. This robustness arises because the noisy statistics are shared and proportionally aggregated, which effectively mitigates the impact of distributional bias. These results confirm the effectiveness of HeteroFedSyn in handling heterogeneous data distributions.}
\begin{table}[t]
\footnotesize
\centering
 \setlength{\tabcolsep}{4pt}
 \setlength{\abovecaptionskip}{1ex} 
 \setlength{\belowcaptionskip}{0ex}
\caption{\revision{Impact of the assumption on the number of selected marginals (1/4, 1/3, and 1/2 of all 2-way marginals) on AdaFedPrivSyn, where $c=5$ and $40\%$ privacy budget is allocated for releasing selected $2$-way marginals.}}
\label{tab:query_error_marginals}
\begin{tabular}{c c c c c c}
\toprule
Dataset & $\epsilon$ & Assumption & Query Error & Fidelity Error & \# Selected Marginals \\
\midrule
\multirow{4}{*}{Adult} 
& \multirow{2}{*}{5} 
  & 1/4 & 0.006 & 0.044 &  11/26\\
& & 1/3 & 0.006 & 0.034 &  35/35\\
& & 1/2 & 0.006 & 0.059 & 52/52\\
\cmidrule(lr){2-6}
& \multirow{2}{*}{1} 
  & 1/4 & 0.011 & 0.090 &  11/26\\
& & 1/3 & 0.017 & 0.158 & 35/35\\
& & 1/2 & 0.015 & 0.204 &  51/52\\
\midrule
\multirow{4}{*}{Obesity} 
& \multirow{2}{*}{5} 
  & 1/4 & 0.023 & 0.084 & 10/34 \\
& & 1/3 & 0.030 & 0.084 & 11/45 \\
& & 1/2 & 0.028 & 0.132 & 20/68\\
\cmidrule(lr){2-6}
& \multirow{2}{*}{1} 
  & 1/4 & 0.085 & 0.330 & 10/34 \\
& & 1/3 & 0.067 & 0.333 & 10/45\\
& & 1/2 & 0.093 & 0.410 & 10/68 \\
\bottomrule
\end{tabular}
\vspace{-1ex}
\end{table}

\revision{\emph{(3) Impact of privacy budget allocation.}
In PrivSyn, the privacy budget allocation is fixed: $20\%$ of the budget is used to select informative 2-way marginals, and the remaining $80\%$ is allocated to releasing the selected marginals. For the comparison reported in \autoref{tab:entire_result}, we configure FedPrivSyn and AdaFedPrivSyn using the same budget allocation to ensure fairness.
However, this allocation is not necessarily optimal. 
In \autoref{fig:epsilon_dist}, we investigate whether different privacy budget allocations have a significant impact on the quality of the synthesized data. We observe that when the privacy budget is relatively large (e.g., $\epsilon\ge 1$), different allocation strategies yield comparable performance. In contrast, under a tighter privacy budget (e.g., $\epsilon=0.2$), allocating a larger portion of the budget (e.g., $80\%$) to releasing 2-way marginals leads to noticeably better performance for both methods. When the privacy budget is sufficient, allocating more budget to marginal selection can help identify more informative 2-way marginals and further improve synthesis quality.}

\revision{\emph{(4) Impact of random projection dimension.}
In FedPrivSyn and AdaFedPrivSyn, we map marginals to a lower-dimensional space to reduce communication overhead. While adding noise in the reduced space decreases the amount of injected noise, dimensionality reduction also introduces compression error. As shown in \autoref{fig:k_vary}, varying the projection dimension $k$ from 5 to 30 leads to non-monotonic performance changes across all three metrics for both algorithms. Although an overall trend can be observed, the results exhibit noticeable fluctuations.
Notably, $k=10$ achieves consistently better performance across all metrics, suggesting a favorable balance between compression error and noise error. In practice, the optimal choice of $k$ strongly depends on the underlying data distribution. For some datasets, such as Insurance, compression can noticeably degrade the performance of HeteroFedSyn (\autoref{tab:entire_result}); however, the reduction in communication overhead is substantial.}

\revision{\emph{(5) Impact of the assumption in AdaFedPrivSyn.}
As described in \autoref{subsec: budget_alloc}, AdaFedPrivSyn repeatedly performs marginal selection and updates InDif2, which requires participants to share selected 2-way marginals in advance. Since the total number of selected marginals is unknown, we assume that at most a fixed fraction of all 2-way marginals will be chosen and allocate the privacy budget accordingly. \autoref{tab:query_error_marginals} evaluates this assumption using three bounds ($1/4$, $1/3$, and $1/2$) on two datasets under two privacy budgets. We observe that a larger bound generally leads to more selected marginals, as the reduced per-marginal budget degrades InDif2 accuracy, though this does not necessarily harm overall utility because the optimal number depends on how many informative marginals are truly needed. Finding this optimum is difficult in the federated setting: static FedPrivSyn ignores marginal dependencies, while AdaFedPrivSyn’s dynamic selection is sensitive to the assumed bound. In practice, we set the bound to $1/3$, which achieves consistently good performance (\autoref{tab:entire_result}).}

\section{Related Work}
Tabular data synthesis has a long research trajectory, primarily focused on the centralized setting.
Existing privacy-preserving works primarily follow two technical approaches.
 
A research line uses probabilistic graphical models (PGMs), which represent correlations among random variables through graph structures.
Some approaches are based on Bayesian networks, such as PrivBayes~\cite{DBLP:conf/sigmod/ZhangCPSX14} and BSG~\cite{DBLP:journals/pvldb/BindschaedlerSG17}, which approximate the joint distribution by privately selecting conditional distributions from attribute–parent pairs. However, repeated use of the exponential mechanism in this selection incurs significant computational overhead and accuracy loss under a limited privacy budget.
Other methods, including PGM~\cite{DBLP:conf/icml/McKennaSM19}, PrivMRF~\cite{cai2021data}, and JTree~\cite{DBLP:conf/kdd/ChenXZX15}, rely on Junction Tree–based Markov networks. PrivMRF and JTree construct an undirected graph by adding attribute pairs with noisy mutual information above a threshold, while PGM formulates an optimization problem to infer a distribution matching observed marginals.
Another direction selects low-dimensional marginals with strong correlations using noise-addition mechanisms, as in MST~\cite{DBLP:journals/jpc/McKennaMS21} and PrivSyn~\cite{DBLP:conf/uss/00010LH0H0021}. Their synthesis strategies differ: MST uses PGM to generate data, whereas PrivSyn employs GUM to iteratively modify a randomly initialized dataset via duplication and replacement.
Finally, AIM~\cite{mckenna2022aim} adopts a workload-aware approach, iteratively refining marginals by selecting and updating poorly answered queries using the exponential mechanism.

Another line of work leverages deep generative models. Early studies focused on Generative Adversarial Networks (GANs)~\cite{xie2018differentially, torkzadehmahani2019dp, jordon2018pate}, which generate data under differential privacy using DP-SGD~\cite{abadi2016deep} by injecting noise into gradients during training. However, GANs were originally designed for continuous data, such as images, and struggle to handle datasets containing mixed numerical, categorical, and ordinal variables. They also have difficulty capturing the full distribution, as the generator produces limited variation.
More recent work explores diffusion models~\cite{truda2023generating, sattarov2024differentially} and Large Language Models (LLMs)~\cite{tran2024differentially} under differential privacy. These approaches, however, typically require large training datasets.

Few studies address data synthesis in federated settings. Most existing methods rely on GANs~\cite{zhao2021fed, fang2022dp, duan2022ht}. The only statistical approach applies PrivBayes to discrete data~\cite{su2016differentially}. However, constructing the Bayesian network requires numerous communication rounds between the server and participants, and its performance is sensitive to the network initialization. As shown in~\cite{DBLP:conf/uss/00010LH0H0021}, PrivSyn outperforms PrivBayes. Moreover, since the implementation of~\cite{su2016differentially} is unavailable, we do not include it in our comparison.

\section{Conclusion}
In this work, we proposed HeteroFedSyn, a differentially private data synthesis framework for the federated setting. Within this framework, we tackled the key challenge of marginal selection without direct access to distributed data, and further introduced an adaptive optimization strategy to enhance selection efficiency.
While this fills an important gap in DP synthesis, it also highlights that handling high-dimensional statistics in a distributed manner causes noise to accumulate across every step under a limited privacy budget.
As a result, even well-designed optimizations may be easily overshadowed.
Looking ahead, we believe further improvements can be achieved not only through algorithmic refinements but also by leveraging public knowledge to reduce privacy cost. We hope HeteroFedSyn serves as a foundation for more practical and scalable privacy-preserving data sharing in federated scenarios.

\section{Acknowledgments}
We thank the anonymous reviewers for their valuable comments and suggestions. Cong Shen is supported by NSF grants AST-2132700, ECCS-2332060, CNS-2313110, and ECCS-2143559. Jing Yang and Fengyu Gao are supported by NSF grants ECCS-2531023 and CNS-2531789.
\bibliographystyle{ACM-Reference-Format}
\bibliography{sample-base}

\appendix

\end{document}
\endinput